\documentclass{article}

\usepackage{amsmath, amsthm, amssymb}
\usepackage{graphicx}
\usepackage{enumerate}

\newtheorem{theorem}{Theorem}
\newtheorem{lemma}[theorem]{Lemma}
\newtheorem{proposition}[theorem]{Proposition}

\theoremstyle{remark}
\newtheorem{remark}[theorem]{Remark}

\theoremstyle{definition}
\newtheorem{property}{Property}

\newcommand{\tas}{\ensuremath{\mathcal{T}}}

\newcommand{\dtilec}[1]{\ensuremath{\mathrm{C}^{\mathrm{dtilec}(#1)}}}

\newcommand{\findstrength}{\ensuremath{\mbox{\sc FindStrength}}}
\newcommand{\findoptstrength}{\ensuremath{\mbox{\sc FindOptimalStrength}}}
\newcommand{\SAT}{\ensuremath{\mbox{\sc Sat}}}
\newcommand{\oneinthree}{\ensuremath{\mbox{\sc 1-in-3-\SAT}}}
\newcommand{\TP}{\ensuremath{\mbox{\sc ThresholdProgramming}}}

\newcommand{\NP}{{\tt NP}}
\newcommand{\north}{{\tt N}}
\newcommand{\west}{{\tt W}}
\newcommand{\south}{{\tt S}}
\newcommand{\east}{{\tt E}}
\newcommand{\dom}{\ensuremath{\mathrm{dom}}}
\renewcommand{\square}[1]{\ensuremath{\mathit{Sq}_{#1}}}
\renewcommand{\Vec}[1]{\ensuremath{\mbox{\boldmath$#1$}}}

\title{On the Behavior of Tile Assembly System at High Temperatures\footnote{
This paper is an extended version of \cite{SekiOkuno2012}. 
}}
\author{Shinnosuke Seki and Yasushi Okuno}

\begin{document}

\maketitle

\begin{abstract}
	Behaviors of Winfree's tile assembly systems (TASs) at high temperatures are investigated in combination with integer programming of a specific form called threshold programming. 
	First, we propose a way to build bridges from the Boolean satisfiability problem ($\SAT$) to threshold programming, and further to TAS's behavior, in order to prove the \NP-hardness of optimizing temperatures of TASs that behave in a way given as input. 
	These bridges will take us further to two important results on the behavior of TASs at high temperatures. 
	The first says that arbitrarily high temperatures are required to assemble some shape by a TAS of ``reasonable'' size. 
	The second is that for any temperature $\tau \ge 4$ given as a parameter, it is \NP-hard to find the minimum size TAS that self-assembles a given shape and works at a temperature below $\tau$.   
\end{abstract}

	\section{Introduction}

The abstract Tile Assembly Model (aTAM), which has been introduced by Winfree \cite{Winfree1998} based on a dynamic version of Wang tiling \cite{Wang1961}, is a model of ``programmable crystal growth'' with algorithmically-rich theoretical background and results. 
Self-assembling systems in this model are called {\it tile assembly systems} (TASs). 
Self-assembling (square) tiles have been experimentally implemented as DNA double-crossover molecules in 1998 \cite{WiLiWeSe1998}, which are designed so ingeniously that they bind deterministically into a single target shape, even subject to the chaotic nature of molecules floating randomly in a well-mixed solution. 
The last three decades have seen drastic advancements on the reliability of DNA tile assembly; in fact, the error rate of 10\% per tile in 2004 was improved down to 0.13\% in 2009 \cite{BaScRoWi2009}. 

In the aTAM, sticky ends are abstracted to be a glue label, and their strengths are assigned by a strength function $g$. 
A square tile adheres stably to an aggregate of tiles whenever the sum of the strengths of neighboring sides with matching labels according to $g$ exceeds a threshold $\tau$, which is a system parameter called {\it temperature}. 
``Temperature'' in aTAM is rather metaphorical than actually describing the physical temperature of the experimental environment. 
It nonetheless can be interpreted as a physical metric for the granularity with which different energy levels must be distinguished in order for TASs to behave as expected. 
Certain ``behaviors'' of TASs were proved to require three different energy levels such that the gap between two of them is exponentially larger than the gap between other two \cite{ChenDotySeki2011} (for a formal definition of TAS's behavior as {\it strength-free TAS}, see Section~\ref{sec:behavior}). 
Conventionally, the glue strengths (and hence, temperature) of TASs are assumed to be integers, without loss of generality. 
Indeed, this is a way of ``quantizing'' the minimum distinction we are willing to make between energies and then re-scalling so that this quantity is normalized to 1.

As mentioned briefly above, the stability of the attachment of a tile at a position is determined by the sum of the strengths that tiles at the neighboring positions offer via their abutting edges, relative to $\tau$. 
In aTAM, the basic building blocks, that is, tiles, are abstracted to be the $1 \times 1$ square so that a tile can be adjacent to at most four other tiles. 
Thus, the sum that determines the attachment stability consists of at most four terms. 
This motivates us to study a system of inequalities whose left-hand side consists of at most 4 terms (non-negative integers or constants) and whose right-hand side is $\tau$. 
We call such an inequality a {\it $\tau$-inequality of at most 4 terms}; we use this term often with the replacement of $\tau$ by either $\ge_\tau$ (at least $\tau$) or $<_\tau$ (strictly less than $\tau$) to specify its sign. 
Then, we can say that $\tau$-inequalities of at most 4 terms dominate the behavior of a TAS at the micro, or {\it local}, level, that is, per attachment event. 
Optimizing (minimizing) $\tau$ under $\tau$-inequalities is a specific type of integer programming we call {\it threshold programming} (TP). 
In this paper, we will prove that TP is \NP-hard even under the condition that all $\ge_\tau$-inequalities involved be of at most 4 terms and all $<_\tau$-inequalities involved be of at most 3 terms (Lemma~\ref{lem:TP43_NPhard}). 
This condition makes TPs reducible to the local behavior of a TAS, and this implies the \NP-hardness of the problem $\findoptstrength$, which aims at optimizing the temperature of TASs that behave in a way specified as input (Theorem~\ref{thm:findoptstrength_NPhard}). 
In other words, $\findoptstrength$ cannot be solved in a polynomial time, unless ${\tt P} = \NP$. 
This \NP-hardness complements a result by Chen, Doty, and Seki \cite{ChenDotySeki2011}, which affirmatively answered a problem posed by Adleman et al.~\cite{AdChGoHuKeEsRo2002}. 

The TP instances obtained in this reduction will lead us further to the study of macro, terminal, or {\it global}, behaviors of TASs. 
The global behavior simply concerns what shape(s) a TAS assembles, and does not mind how its underlying attachment events proceed. 
Well-examined problems on the global behavior of TASs includes finding the optimal design of TASs that certainly assembles the $n \times n$ square for a given $n$ \cite{AdChGoHu2001, AdChGoHuKeEsRo2002, RothemundWinfree2000}. 
There and also in this paper, the optimality of TASs is measured by the number of distinct tile types they contain, and this criterion is called the {\it tile complexity}.  
The most important contribution we make in this paper along this line is the proposal of a unified framework to convert a system $\mathcal{S}$ of $\tau$-inequalities into a shape $S$ with the property that if $\mathcal{S}$ is solvable for $\tau = k$, then the resulting shape $S$ can be assembled by a temperature-$k$ TAS of ``reasonable size'' (Property~\ref{property:TAS_as_inequality_system_solver}). 
According to this framework, it suffices to choose a system of $\tau$-inequalities that requires $\tau \ge k$ for its solvability in order to obtain a shape which prefers the temperature $k$ to those below with respect to tile complexity (Theorem~\ref{thm:tau_better_than_less}). 
A choice of another system proves to enable this framework prove also that, for any $\tau \ge 4$, it is \NP-hard to compute the minimum number of tile types required for TASs at a temperature at most $\tau$ to self-assemble the shape (Theorem~\ref{thm:TC_above_4_NPhard}). 


Current laboratory techniques allow us to handle only at most 2 distinct energy levels, that is, temperature 2 (even making a distinction between two energy levels is difficult, see, e.g., \cite{CookFuSchweller2011} and references therein, but successful designs of self-assembling molecular systems at temperature 2 have been reported \cite{FuHaPaWiMu2009,RothemundPapadakisWinfree2004}). 
Therefore, as of this date, we cannot help but interpret our results as computational infeasibility of determining whether behaviors of TAS can be implemented physically. 

This paper is organized as follows. 
After the preliminary section, in Section~\ref{sec:behavior}, we will formalize the local behavioral equivalence among TASs. 
The section consists of opening paragraphs that introduce the formal definition of the equivalence and formalize the problem $\findoptstrength$, being followed by two subsections: 
Section~\ref{subsec:TP_preliminaries} is a preliminary for threshold programming and makes necessary preparations for the proof of the \NP-hardness of $\findoptstrength$. 
Then in Section~\ref{subsec:findoptstrength_NPhard} we present the succession of \NP-hardness results: quadripartite $\oneinthree$ (Lemma~\ref{lem:quad_1in3SAT}), threshold programming (Lemma~\ref{lem:TP43_NPhard}), and then $\findoptstrength$ (Theorem~\ref{thm:findoptstrength_NPhard}), chained by Karp-reductions. 
Using these results, in Section~\ref{sec:complexity}, we will prove the above-mentioned theorems on the global behavior of TASs. 

	\section{Abstract Tile Assembly Model}

This section aims at tersely introducing the reader to the aTAM.
An excellent tutorial can be found, for instance, in \cite{RothemundWinfree2000}. 

Let $\Sigma$ be an alphabet, and by $\Sigma^*$, we denote the set of finite strings over $\Sigma$. 
By $\mathbb{Z}$ and $\mathbb{N}$, we denote the set of integers and the set of positive integers, respectively, and let $\mathbb{N}_0 = \mathbb{N} \cup \{0\}$. 
In aTAM, $\mathbb{Z}^2$ is especially considered either as the two-dimensional integer lattice or as the set of all points on it. 

Given a set of points $A \subseteq \mathbb{Z}^2$ on the integer lattice, the {\it full grid graph} of $A$ is the undirected graph $G^{\rm f}_A = (V, E)$, where $V = A$ and for all $u, v \in V$, there is an edge between $u$ and $v$ if and only if $||u-v||_2 = 1$, where $|| \cdot ||_2$ is the Manhattan distance, that is, $u$ and $v$ are adjacent points. 
A {\it shape} is a set $S \subseteq \mathbb{Z}^2$ such that $G^{\rm f}_S$ is connected, and we denote the set of all {\it finite} shapes by $\mathcal{FS} \subseteq \mathcal{P}(\mathbb{Z}^2)$.
Let $\north, \west, \south, \east$ stand for the respective directions north, west, south, and east, and be also interpreted as the respective unit vectors $(0, 1), (-1, 0), (0, -1), (1, 0)$. 

A {\it tile type} $t$ is a quadruple $t \in \Sigma^* \times \Sigma^* \times \Sigma^* \times \Sigma^*$, and is regarded as a unit square with four sides listed in the counter-clockwise order starting at the north (\north), each having a {\it glue label} (a.k.a., {\it glue}) taken from $\Sigma^*$. 
For each direction $d \in \{\north, \west, \south, \east\}$, let $t(d)$ be the glue label at the $d$ side of $t$. 
Let $T$ be a {\it finite} set of tile types, and let us denote the (finite) set of all glues of tile types in $T$ by $\Lambda(T) \subsetneq \Sigma^*$. 
An {\it assembly} (a.k.a., {\it supertile}) is a positioning of tiles of types in $T$ on (part of) the integer lattice $\mathbb{Z}^2$.  
It does not have to be a tessellation. 
Hence, we can say that an assembly is a partial function $\mathbb{Z}^2 \dashrightarrow T$.
Given two assemblies $\alpha, \beta: \mathbb{Z}^2 \dashrightarrow T$, $\alpha$ is a {\it sub-assembly} of $\beta$, written as $\alpha \sqsubseteq \beta$, if $\dom(\alpha) \subseteq \dom(\beta)$ and for every point $p \in \dom(\alpha)$, $\alpha(p) = \beta(p)$, where $\dom$ denotes the domain of the function. 

The aTAM models dynamics in the growth of assemblies based on the interaction among its basic building blocks, tiles. 
A {\it strength function $g: \Lambda(T) \to \mathbb{N}_0$} endows tiles with an ability to interact with its neighboring tiles by assigning the strength $g(\ell)$ to the {\it matching} label $\ell$ of their abutting edges.
If the labels do not match or $g(\ell) = 0$, these tiles do not interact; otherwise, they do. 
An assembly $\alpha$ and a strength function $g$ induce a {\it binding graph}, which is a grid graph whose vertices are $\dom(\alpha)$ and for two neighboring positions $p_1, p_2 \in \dom(\alpha)$, there is an edge between $p_1$ and $p_2$ on this graph if and only if the tiles $\alpha(p_1)$ and $\alpha(p_2)$ interact.  
On this graph, an edge between vertices means that the corresponding tiles interact, and hence, their abutting edges share the same label $\ell$. 
Thus, we can consider that the edge is labeled with $\ell$ and $g$ gives it the weight $g(\ell)$. 
The assembly is {\it $\tau$-stable (with respect to $g$)} if every cut of its binding graph has strength at least $\tau$. 
That is, the assembly is $\tau$-stable if at least energy $\tau$ is required to separate it into two parts. 

A {\it (seeded) tile assembly system} (TAS) is a quadruple $\tas = (T, \sigma, g, \tau)$, where $T$ and $g$ are as stated above, $\sigma: \mathbb{Z}^2 \dashrightarrow T$ is a finite $\tau$-stable {\it seed assembly}, and $\tau \ge 1$ is an integer parameter called {\it temperature}. 
TASs are provided with inexhaustible supply of copies of each tile type, each copy being referred to as a {\it tile}. 
If the seed assembly $\sigma$ consists of a single tile, then $\tas$ is said to be {\it singly-seeded}. 
In this paper, we consider only singly-seeded TASs.

Given two $\tau$-stable assemblies $\alpha, \beta$, we write $\alpha \to_1^{\tas} \beta$ if $\alpha \sqsubseteq \beta$ and $\dom(\beta) \setminus \dom(\alpha) = \{p\}$ for some position $p \in \mathbb{Z}^2$. 
Intuitively, this means that $\alpha$ can grow into $\beta$ by the addition of a single tile at the position $p$. 
Since $\beta$ is required to be $\tau$-stable, the new tile is able to bind to $\alpha$ with strength at least $\tau$. 
In this case, we say that {\it $\alpha$ $\tas$-produces $\beta$ in one step}. 

A sequence of $\tau$-stable assemblies $\alpha_0, \alpha_1, \ldots, \alpha_k$ is a {\it $\tas$-assembly sequence} if for all $1 \le i \le k$, $\alpha_{i-1} \to_1^{\tas} \alpha_i$ holds. 
We write $\alpha \to^\tas \beta$ and say {\it $\alpha$ $\tas$-produces $\beta$} (in 0 or more steps) if there is a $\tas$-assembly sequence $\alpha_0, \alpha_1, \ldots, \alpha_k$ of length $k = |\dom(\beta) \setminus \dom(\alpha)|$ with $\alpha_0 = \alpha$ and $\alpha_k = \beta$. 
(This definition of producibility is justified by our limited focus only onto the finite assemblies in this paper; for the infinite assembly, it is not appropriate; see \cite{BrChDoKaSe2011} for instance.) 
An assembly $\alpha$ is {\it $\tas$-producible} or {\it producible by $\tas$} if $\sigma \to^\tas \alpha$. 
A $\tau$-stable assembly $\alpha$ is {\it ($\tas$-)terminal} if for any $\tau$-stable assembly $\beta$, $\alpha \to^\tas \beta$ implies $\alpha = \beta$. 
Let $\mathcal{A}[\tas]$ be the set of assemblies producible by $\tas$, and let $\mathcal{A}_\Box[\tas] \subseteq \mathcal{A}[\tas]$ be the set of terminal assemblies that are producible by $\tas$. 
A TAS $\tas$ is {\it directed} if the poset $(\mathcal{A}[\tas], \to^\tas)$ is directed, i.e., for each $\alpha, \beta \in \mathcal{A}[\tas]$, there exists $\gamma \in \mathcal{A}[\tas]$ such that $\alpha \to^\tas \gamma$ and $\beta \to^\tas \gamma$. 
Given a shape $S \subseteq \mathbb{Z}^2$, a TAS $\tas$ {\it strictly (a.k.a.,~uniquely) (self-)assembles} $S$ if the shape of every terminal assembly produced by $\tas$ is $S$. 

	\subsection{Directed tile complexity}

A directed TAS that strictly self-assembles a shape $S$ can be regarded as a ``program'' to output the shape. 
A measure of how concisely one can describe such a TAS with respect to the number of tile types was introduced by Rothemund and Winfree \cite{RothemundWinfree2000} in the name of directed tile complexity of $S$. 
This complexity has been intensely investigated for TASs at the temperature 1 or 2 \cite{AdChGoHuKeEsRo2002}. 
The {\it temperature-2 directed tile complexity} of $S$ is formally defined as follows: 
\[
	\dtilec{2}(S) = \min \biggl\{|T| 
	\left| 
	\begin{array}{r}
	\mbox{$\tas = (T, \sigma, g, 2)$ is a directed TAS} \\ 
	\mbox{that strictly self-assembles $S$} 
	\end{array}
	\right\}. 
\]
This is the minimum number of tile types required for a directed TAS at the temperature 2 to strictly self-assemble $S$. 
Since any temperature-1 TAS can be simulated at the temperature 2 simply by doubling the strength associated to each of its labels (see also Proposition~\ref{prop:simulation_at_high_temp}), this definition indeed has already taken the temperature-1 TASs into account. 
We parameterize this complexity measure by a temperature $\tau$ as: 
\[
	\dtilec{\tau}(S) = \min \biggl\{|T| 
	\left| 
	\begin{array}{r}
	\mbox{$\tas = (T, \sigma, g, \tau)$ is a directed TAS} \\ 
	\mbox{that strictly self-assembles $S$} 
	\end{array}
	\right\},  
\]
and introduce it as {\it directed tile complexity of $S$ at the temperature $\tau$}. 

As mentioned above, $\dtilec{1}(S) \ge \dtilec{2}(S)$ for any $S$. 
Now we show that for any temperature $\tau$ and a positive integer $k$, $\dtilec{\tau}(S) \ge \dtilec{k\tau}(S)$ holds. 

\begin{proposition}\label{prop:simulation_at_high_temp}
	For any $\tau \in \mathbb{N}$, TASs at the temperature $\tau$ can be simulated at any temperature that is a multiple of $\tau$. 
\end{proposition}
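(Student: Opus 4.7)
The plan is to show that simulation can be done by a trivial scaling of the strength function, keeping the tile set and the seed unchanged. Given a TAS $\tas = (T, \sigma, g, \tau)$ and a positive integer $k$, I would define $\tas' = (T, \sigma, g', k\tau)$ where $g'(\ell) = k \cdot g(\ell)$ for every glue $\ell \in \Lambda(T)$. The claim is that $\tas$ and $\tas'$ have identical sets of producible assemblies and identical assembly sequences.

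The key observation is that $\tau$-stability is characterized by a system of linear inequalities on cut weights, and multiplying every glue strength by $k$ exactly scales each such inequality by $k$ on both sides (with $\tau$ replaced by $k\tau$ on the right). Concretely, for an arbitrary assembly $\alpha$ and an arbitrary cut of its binding graph, the strength of the cut under $g'$ equals $k$ times its strength under $g$; hence the cut has strength at least $\tau$ under $g$ if and only if it has strength at least $k\tau$ under $g'$. Therefore the set of $\tau$-stable assemblies with respect to $g$ coincides with the set of $k\tau$-stable assemblies with respect to $g'$. In particular, $\sigma$ remains stable, so $\tas'$ is a valid TAS.

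Next, I would verify that the single-step relations agree: $\alpha \to_1^{\tas} \beta$ holds iff $\alpha \sqsubseteq \beta$, $\dom(\beta) \setminus \dom(\alpha)$ is a single position $p$, and $\beta$ is $\tau$-stable; by the previous step, this condition is equivalent to the corresponding condition for $\tas'$ at temperature $k\tau$. Consequently, every $\tas$-assembly sequence is a $\tas'$-assembly sequence and vice versa, which gives $\mathcal{A}[\tas] = \mathcal{A}[\tas']$ and $\mathcal{A}_\Box[\tas] = \mathcal{A}_\Box[\tas']$. Directedness and the shape strictly self-assembled by the system are therefore preserved.

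There is no real obstacle here; the only care required is to articulate explicitly that cut-strengths scale linearly and that $\tau$-stability is the sole place temperature enters the definitions of producibility, terminality, and directedness, so nothing else can break under scaling. The proposition then follows immediately.
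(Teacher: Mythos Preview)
Your proposal is correct and takes essentially the same approach as the paper: define $\tas' = (T, \sigma, g', k\tau)$ with $g'(\ell) = k \cdot g(\ell)$. The paper's own proof simply states this construction without the verification you supply; your additional argument that cut strengths scale linearly and hence stability, producibility, and terminality are preserved is a valid (and more explicit) justification of the same idea.
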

\begin{proof}
	This simulation is simply done by multiplying the strengths and $\tau$ of a given TAS $\tas_1 = (T, \sigma, g, \tau)$ by a proper constant $c$. 
	The TAS thus obtained is $\tas_2 = (T, \sigma, g', c\tau)$ with $g'(\ell) = c g(\ell)$ for each glue label $\ell$ in $T$. 
\end{proof}

Neither this proposition nor Theorem~\ref{thm:tau_better_than_less} in Section~\ref{sec:complexity} implies the monotonically decreasing property of $\dtilec{\tau}(S)$, being considered as a function of $\tau$. 
Indeed, it is not so as being exemplified at the end of Section~\ref{sec:complexity}. 
This non-monotonicity motivates us to introduce the notion of {\it directed tile complexity of $S$ at the temperatures at most $\tau$}. 
This measure is to be defined as 
\[
	\dtilec{\le \tau}(S) = \min\{\dtilec{i}(S) \mid 1 \le i \le \tau\}.
\]

As the TASs $\tas_1$ and $\tas_2$ in the proof of Proposition~\ref{prop:simulation_at_high_temp}, more than one TAS can exhibit identical behaviors at the local (attachment) level in the sense that whenever some of the four sides of a tile cooperate for the stable attachment in one TAS, a tile of the same type do so in the other TASs, though these TAS may be at different temperatures and may assign each label with different strengths. 
As a result, they produce the same shapes. 
In the next section, we will formalize this behavioral equivalence among TASs at the local level, and work problems related to the temperature and the parameterized tile complexity that were left open in \cite{AdChGoHuKeEsRo2002, ChenDotySeki2011}. 

	\section{Behavioral Equivalences among TASs}
	\label{sec:behavior}

The ``behavior'' of a TAS $\tas = (T, \sigma, g, \tau)$ is determined fully by its strength function $g$ and temperature $\tau$. 
More precisely, $g$ and $\tau$ do so by specifying the local behavior of each tile type $t \in T$ in the form of {\it cooperation set of $t$ with respect to $g$ and $\tau$} \cite{ChenDotySeki2011}, which is defined as:  
\[
	\mathcal{D}_{g, \tau}(t) = \Bigl\{ D \subseteq \{\north, \west, \south, \east\} \Bigm| \mbox{$\sum_{d \in D} g(t(d)) \ge \tau$} \Bigr\}. 
\]
This is the collection of sets of four sides of $t$ whose glues have sufficient strengths to bind $t$ cooperatively. 
By definition, if a set of sides of $t$ is in $\mathcal{D}_{g, \tau}(t)$, then any of its superset is also included in $\mathcal{D}_{g, \tau}(t)$. 
Any tile type whose cooperation set is empty can be rid from $\tas$ because a tile of that type never attaches. 
Combining these together, we assume that for any $t \in T$, $\{\north, \west, \south, \east\} \in \mathcal{D}_{g, \tau}(t)$. 

Cooperation sets provide a behavioral equivalence among TASs. 
Given $\tas_1 = (T, \sigma, g_1, \tau_1)$ and $\tas_2 = (T, \sigma, g_2, \tau_2)$ that share the tile set $T$ and seed $\sigma$, if $\mathcal{D}_{g_1, \tau_1}(t) = \mathcal{D}_{g_2, \tau_2}(t)$ for each tile type $t \in T$, then these TASs are said to be {\it locally equivalent} (written as $\tas_1 \sim \tas_2$) \cite{ChenDotySeki2011}. 
The behaviors of locally equivalent TASs are exactly the same at the tile attachment level.
This easily leads us to the property that a sequence of assemblies $\alpha_0, \alpha_1, \ldots, \alpha_k$ is a $\tas_1$-assembly sequence if and only if it is a $\tas_2$-assembly sequence. 
From this it follows that these TASs produce the same assemblies as well as the same terminal assemblies. 
In short, $\tas_1 \sim \tas_2$ implies $\mathcal{A}[\tas_1] = \mathcal{A}[\tas_2]$ and $\mathcal{A}_\Box[\tas_1] = \mathcal{A}_\Box[\tas_2]$. 
As a corollary, we can see that given locally equivalent TASs, one is directed and strictly self-assembles a shape if and only if so is and does the other. 

The local equivalence $\sim$ divides the set of all TASs into the equivalence classes. 
All the TASs in a resulting equivalence class behave exactly in the same way locally, and hence, we are allowed to use the term ``behavior of this class.'' 
This means that the class can choose a TAS $\tas = (T, \sigma, g, \tau)$ it contains arbitrarily as representative in describing its behavior by the pair $(T, \{\mathcal{D}_{g, \tau}(t) \mid t \in T\})$. 
This suggests a way to define a variant of TAS by assigning each $t \in T$ with a set of subsets of $\{\north, \west, \south, \east\}$ as a cooperation set. 
Chen, Doty, and Seki introduced this variant called {\it strength-free TAS} \cite{ChenDotySeki2011} as it is free from strength function or temperature. 
Formally, a {\it strength-free TAS} is a triple $(T, \sigma, \mathcal{D})$, where $T$ and $\sigma$ are the same as those for TAS, while $\mathcal{D}: T \to \mathcal{P}(\mathcal{P}(\{\north, \west, \south, \east\}))$ is a function from a tile type $t \in T$ to a set of subsets of $\{\north, \west, \south, \east\}$ that is closed under superset operation, where $\mathcal{P}$ means the power set. 
As done between TASs, we can define the local equivalence between a TAS $(T, \sigma, g, \tau)$ and a strength-free TAS $\tas_{\rm sf} = (T, \sigma, \mathcal{D})$ as: they are {\it locally-equivalent} if $\mathcal{D}(t) = \mathcal{D}_{g, \tau}(t)$ for each $t \in T$. 
For an equivalence class, if $\tas_{\rm sf}$ is locally equivalent to an element of the class, then so is it to all of them. 
This observation allows us to regard the strength-free TAS as a representative of (the local behavior of) this class. 
Of particular note is that such a strength-free TAS is unique for each class. 
It must be also noted that there exists a strength-free TAS that does not represent any class, that is, that is not locally equivalent to any TAS. 
This implementability check was formalized as $\findstrength$ in \cite{ChenDotySeki2011}, which is defined as follows:
\[
\begin{array}{l}
\findstrength \\
\begin{array}{ll}
{\tt INPUT:}  	& \mbox{a strength-free TAS $\tas_{\rm sf}$} \\
{\tt OUTPUT:} 	& \mbox{a TAS that is locally equivalent to $\tas_{\rm sf}$, if any,}\\  
		& \mbox{or reports that none of such TAS exists otherwise}. 
\end{array}
\end{array}
\]
A polynomial-time algorithm for this problem was proposed in \cite{ChenDotySeki2011}. 

The strength-free TAS was introduced as a technical tool to solve an open problem posed by Adleman et al.~\cite{AdChGoHuKeEsRo2002}. 
In the paper, they proposed an algorithm to find a minimum size directed TAS $\tas = (T, \sigma, g, 2)$ that strictly self-assembles the $n \times n$ square $\square{n}$, subject to the constraint that the TAS's temperature is 2. 
Note that $\dtilec{2}(\square{n}) = O(\frac{\log n}{\log \log n})$ \cite{AdChGoHu2001}. 
This algorithm enumerates all temperature-2 TASs with at most $\dtilec{2}(\square{n})$ tile types, and checks whether each of them is directed and strictly self-assembles $\square{n}$ (this is proved to be polynomial-time checkable in $n$). 
The temperature of a system to be checked need not be 2, but rather the temperature-2 restriction\footnote{This can be replaced with the temperature-$c$ restriction for any constant $c \ge 1$.} is utilized to upper-bound the number of all candidates to be thus checked by a polynomial in $n$, and as a result, this algorithm runs in a polynomial time in $n$. 
The open problem was whether the temperature upper-bound could be removed. 

Chen, Doty, and Seki answered this problem positively by designing an algorithm that runs in polynomial time in $n$ without relying on such an upper-bound \cite{ChenDotySeki2011}. 
Though being based on the above-mentioned idea by Adleman et al.~basically, their algorithm enumerates strength-free TASs with at most $\dtilec{2}(\square{n})$ tile types instead of conventional ones, and commits extra check for the implementability. 
This algorithm raised another problem of whether it could be modified so as to output among all the minimum size directed TASs for $\square{n}$ the one(s) working at the lowest temperature. 
This motivates us to study the following optimization: 
\[
\begin{array}{l}
\findoptstrength \\
\begin{array}{ll}
{\tt INPUT:}  	& \mbox{a strength-free TAS $\tas_{\rm sf}$} \\
{\tt OUTPUT:} 	& \mbox{a TAS of minimal temperature that is locally equivalent to $\tas_{\rm sf}$} \\
		& \mbox{if any, or reports that none of such TAS exists otherwise}. 
\end{array}
\end{array}
\]
One of the main contributions of this paper is to prove the \NP-hardness of this problem (Theorem~\ref{thm:findoptstrength_NPhard}). 

	\subsection{Threshold Programming}
	\label{subsec:TP_preliminaries}

0-1 integer programming, one of the Karp's 21 \NP-complete problems \cite{Karp1972}, is a decision variant of integer programming in which all the variables are restricted to be binary. 
In order to prove the completeness, Karp employed a polynomial-time reduction from the Boolean satisfiability problem ($\SAT$) to this problem. 

In order to prove the \NP-hardness of $\findoptstrength$, let us introduce a subclass of integer programming (IP) that aims at optimizing $\tau$ subject to only $\ge_\tau$-inequalities and $<_\tau$-inequalities. 
We call such an IP a {\it threshold programming} (TP). 
This is formalized as: for given integer matrices $C_1, C_2$, minimize $\tau$ on condition that there exists a nonnegative integer vector $\Vec{x}$ satisfying 
\[
	\mbox{$C_1 \vec{x} \ge \tau \Vec{1}$ ($\ge_\tau$-inequalities) and $C_2 \vec{x} < \tau \Vec{1}$ ($<_\tau$-inequalities)}, 
\]
where $\Vec{1}$ is the vector all of whose components are 1. 

$\findoptstrength$ is actually a TP any of whose constraints is either a $\ge_\tau$-inequality of at most 4 terms or a $<_\tau$-inequality of at most 3 terms. 
In order to see this, let us consider a simple strength-free TAS with only one tile type $t = (\ell_1, \ell_2, \ell_3, \ell_4)$ and a cooperation set $\mathcal{D}(t) = \{\{\north, \west, \south, \east\}\}$.
Finding a TAS at lowest temperature that is locally equivalent to this strength-free TAS is equal to solving the following system of inequalities with a {\it positive integer} variable $\tau$: 
\begin{equation}\label{eq:TP_example_1}
\left\{
\begin{array}{lll}
  \ell_1 + \ell_2 + \ell_3 + \ell_4 &\ge& \tau \\
  \ell_1 + \ell_2 + \ell_3 &<& \tau \\
  \ell_1 + \ell_2 + \ell_4 &<& \tau \\
  \ell_1 + \ell_3 + \ell_4 &<& \tau \\
  \ell_2 + \ell_3 + \ell_4 &<& \tau \\
\end{array}
\right.
\end{equation}
so as to minimize $\tau$. 
The minimum $\tau$ for this system is 4 with $\ell_1 = \ell_2 = \ell_3 = \ell_4 = 1$ (with $\tau$ being strictly less than 4, this system is not solvable). 
The first inequality in \eqref{eq:TP_example_1} corresponds to $\{\north, \west, \south, \east\} \in \mathcal{D}(t)$, the second corresponds to $\{\north, \west, \south\} \not\in \mathcal{D}(t)$, and so forth. 
Though \eqref{eq:TP_example_1} does not seem to contain an inequality corresponding to $\{\north, \west\} \not\in \mathcal{D}(t)$, the second one (or third) actually implies this, and hence, not presented for the sake of space. 
In any case, the system for a cooperation set contains at most 15 inequalities. 
We denote the class of TPs any of whose constraints contains at most the number of terms as specified above by TP(4,~3). 
Its decision variant, denoted by $\tau$-$\TP(4, 3)$ or simply $\tau$-TP(4,~3), is of interest in which $\tau$ is not a variable but a given constant, and one is asked to decide whether $\Vec{x}$ exists. 

Our arguments below mainly consist of designing systems of $\tau$-inequalities for various purposes. 
As a tool, we introduce a sub-system that will be embedded into these systems and force their variables to assume at least or exactly some specific value. 
It is built on the following pair of $\tau$-inequalities: 
\begin{equation}\label{eq:positive}
	x_1 + x_a \ge \tau \mbox{ and } x_a < \tau. 
\end{equation} 
This pair implies $x_1 \ge 1$. 
Once \eqref{eq:positive} being embedded into a $\tau$-inequality system, the variable $x_1$ cannot help but assume a positive value itself ($x_a$ is assumed to be an auxiliary variable occurring only in \eqref{eq:positive}). 
In the rest, when we say that a system has a positive variable $x$, we assume that its positiveness is guaranteed in this way. 

Let us build a sub-system called {\it $2^{i+1}$-adder (to the lower bound of a variable)}.
Based on a variable $x$ with a lower bound $n$, i.e., $x \ge n$, it aims at creating another variable with a lower bound $n+2^{i+1}$. 
Using $5i{+}5$ positive integer variables $z_1, z_2, x_0, x_b, x_c, A_k, A_k', A_k'', B_k', B_k'' (1 \le k \le i)$, we design the $2^{i+1}$-adder as follows: for $2 \le j \le i$, 
\begin{equation}\label{eq:ChenDotySeki2011}
\begin{array}{llll}
	A_1' + B_1' + x_0, 	& A_1'' + B_1'' + x_0 		&\ge& \tau, \\
	A_1  + B_1' + x_0, 	& A_1' + B_1'' + x_0 		&<& \tau, \\
	A_{j-1} + B_j' + A_j', 	& A_{j-1} + B_j'' + A_j'' 	&\ge& \tau, \\
	A_{j-1}'' + B_j' + A_j, & A_j' + B_j'' + A_{j-1}'' 	&<& \tau, \\
	A_i'' + x_b, 		& z_1 + x_c 			&<& \tau, \\ 
	z_1 + x_b, 		& z_2 + x_c 			&\ge& \tau. 
\end{array}
\end{equation}
This is actually a modification of a system of inequalities proposed in \cite{ChenDotySeki2011}, and as shown there all the inequalities of \eqref{eq:ChenDotySeki2011} but those on the last two lines imply 
\begin{equation}\label{eq:recurrence_relation_on_Ai}
	A_i'' \ge A_i + 2^{i+1}-2 
\end{equation}
for $1 \le k \le i$. 
The remaining four inequalities yield $z_1 \ge A_i''+1$ and $z_2 \ge z_1 + 1$ (in fact, they implement a $2^1$-adder). 
As a result, $z_2 \ge A_i + 2^{i+1}$. 

Assume that we have a variable $x$ whose value is at least a positive integer $n$. 
By setting $x = A_i$, we can have $A_i$ in the $2^i$-adder assume not only be positive but be at least $n_1$. 
Then $z_2 \ge n + 2^{i+1}$. 
Combining \eqref{eq:recurrence_relation_on_Ai}, $z_1 \ge A_i''+1$, $z_1 + x_c < \tau$, $A_i \ge n$, and $x_b, x_c \ge 1$ deduces that $\tau$ must be at least $n + 2^{i+1}+1$. 
It is important that for any $\tau \ge n+2^{i+1}+1$, this system can be solved as follows: 
\[
\begin{array}{l}
	A_i = n, x_0 = A_1 = \cdots = A_{i-1} = 1, \\
	A_k' = 2^k, B_k' = \tau - A_k' -1, A_k'' = 2^{k+1}-1, B_k'' = \tau - A_k'' -1 \mbox{ for $1 \le k < i$}, \\
	A_i' = n + 2^i -1, B_i' = \tau - A_i' -1, A_i'' = n + 2^{i+1}-2, B_i'' = \tau - A_i'' -1, \\
	z_1 = n+2^{i+1}-1, z_2 = n+2^{i+1}, x_b = \tau - (n+2^{i+1}-1), x_c = \tau - (n+2^{i+1}). 
\end{array}
\]
With the fact that any positive number $m \ge 1$ can be written as a sum of powers of 2, this property makes possible to combine multiple copies of $2^{i+1}$-adders inductively in order to provide a variable with an arbitrarily large lower bound. 
The number of variables involved in the system thus built is at most $\sum_{i = 1}^{\lceil \log m \rceil} (5i+5)$, which is $O((\log m)^2)$. 

Concerning \eqref{eq:ChenDotySeki2011}, the $2^{i+1}$-adder, there are two things to be noted. 
The first is that it consists of $\tau$-inequalities of at most three terms. 
The second is that we can divide its variables into four disjoint sets $V_1, V_2, V_3, V_4$ such that each inequality contains at most one variable from each of the four sets. 
One such division is: $V_1 = \{x_0, z_1, z_2\} \cup \{A_{2k}, A_{2k}', A_{2k}'' \mid 1 \le k \le \lfloor i/2 \rfloor\}$, $V_2 = \{A_{2k-1}, A_{2k-1}', A_{2k-1}'' \mid 1 \le k \le \lfloor i/2 \rfloor\}$, $V_3 = \{B_j, B_j', B_j'' \mid 1 \le j \le i\} \cup \{x_b, x_c\}$, and $V_4 = \emptyset$. 
We say that a system of $\tau$-inequalities is {\it quadripartite} if 
\begin{enumerate}
\item	every inequality in it consists of at most {\it four} terms; 
\item	its variable set can be partitioned into four disjoint subsets so that distinct terms of each inequality belong to distinct subsets. 
\end{enumerate}
(The first condition follows from the second and hence not necessary.)
Quadripartite systems of $\tau$-inequalities will play an essential role in Section~\ref{sec:complexity}. 

	\subsection{$\findoptstrength$ is \NP-hard}
	\label{subsec:findoptstrength_NPhard}

Let us prove the \NP-hardness of $\findoptstrength$. 
For the reduction, we employ a variant of 3-$\SAT$ called monotone $\oneinthree$ introduced by Schaefer \cite{Schaefer1978}, in which no literal is negated (monotonicity) and one is required to find a truth assignment such that each clause has {\it exactly one} true literal.
He proved its \NP-completeness. 
We propose its restricted variant called {\it quadripartite $\oneinthree$}, whose instance consists of a variable set that is a union of four pairwise disjoint sets $U_1, U_2, U_3, U_4$ and clauses that contain at most one variable from each of these four subsets. 

\begin{lemma}\label{lem:quad_1in3SAT}
	Quadripartite $\oneinthree$ is \NP-complete.
\end{lemma}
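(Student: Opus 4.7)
The plan is a standard Karp reduction from monotone $\oneinthree$ (Schaefer's \NP-complete problem) together with a quadripartition-respecting variable-cloning gadget. Membership of quadripartite $\oneinthree$ in \NP{} is immediate: given a candidate truth assignment, one verifies in polynomial time that each clause has exactly one true literal, so I focus on \NP-hardness.

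Given a monotone $\oneinthree$ instance $\varphi$ with variable set $U$, for every $u \in U$ I introduce four copies $u^{(1)}, u^{(2)}, u^{(3)}, u^{(4)}$ and place $u^{(i)}$ into part $U_i$. Each original clause $(u, v, w)$ is then replaced by the quadripartite clause $(u^{(1)}, v^{(2)}, w^{(3)})$, whose three literals sit in three distinct parts. To make this substitution sound, I must enforce that the four copies of every variable always take the same truth value, and this enforcement must itself be done with quadripartite clauses.

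The key gadget relies on the following identity in $\oneinthree$: for fresh variables $x, y$, the pair of clauses $(a, x, y)$ and $(b, x, y)$ forces $a + x + y = 1 = b + x + y$, hence $a = b$; conversely, any common value $a = b \in \{0, 1\}$ extends to $x, y$ by setting $x = y = 0$ when $a = 1$ and making exactly one of $x, y$ true when $a = 0$. To equate $u^{(i)}$ and $u^{(j)}$ in a quadripartite way, I pick fresh $x \in U_k$ and $y \in U_l$ with $\{k, l\} = \{1, 2, 3, 4\} \setminus \{i, j\}$, so that each of the two resulting clauses has its three literals spread across three distinct parts. Chaining the equalities $u^{(1)} = u^{(2)}$, $u^{(2)} = u^{(3)}$, $u^{(3)} = u^{(4)}$ with fresh auxiliaries introduces $6$ new variables and $6$ new clauses per original variable, so the reduction is polynomial.

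Correctness then follows by a routine check: a satisfying assignment of $\varphi$ lifts to one of the reduced instance by setting all copies to the value of $u$ and filling the gadget auxiliaries as above, while any satisfying assignment of the reduced instance must equate the four copies of each variable via the gadgets and therefore projects back to a satisfying assignment of $\varphi$. The main obstacle is precisely the design of the equality gadget: $\oneinthree$ offers much less room than general $\SAT$, so one cannot, e.g., express $a = b$ with a two-literal clause, and the cloning must be achieved through the three-literal gadget above while still respecting the quadripartition constraint on every clause. Once the ``free-pair'' placement $\{k, l\} = \{1, 2, 3, 4\} \setminus \{i, j\}$ is available---which is exactly why the partition has four parts rather than three---this obstacle dissolves.
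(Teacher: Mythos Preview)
Your proof is correct and takes a genuinely different route from the paper's. The paper transforms each clause $\{x,y,z\}$ into a fixed conjunction of 28 clauses over many fresh auxiliaries, designed so that the (negated) original variables all land in one part $U_1$; this way a variable shared by several clauses never has to live in two different parts. You instead clone every variable into four copies, one per part, and tie the copies together with the elegant two-clause equality gadget $(a,x,y),(b,x,y)$, placing the fresh pair $x,y$ in the two ``free'' parts $\{1,2,3,4\}\setminus\{i,j\}$. Your argument is shorter, stays monotone throughout, and makes transparent why four parts are needed: with only three parts the free-pair placement would force $x,y$ into a single part, breaking the quadripartition. The paper's heavier gadget buys the extra structural feature that all original variables occupy a single part, but nothing downstream in the paper actually requires this---the later reductions only use that the three variables of each clause lie in three distinct parts, which your construction delivers just as well. (A minor observation: you never use the fourth copy $u^{(4)}$ in the main clauses, so the chain $u^{(1)}=u^{(2)}=u^{(3)}$ already suffices; the extra link is harmless but unnecessary.)
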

\begin{proof}
	Let us denote a given instance of monotone $\oneinthree$ by a pair of a set $U$ of Boolean variables and a set of clauses of three literals, which are positive due to the monotonicity. 
	We will show a polynomial-time reduction from this to an instance of quadripartite $\oneinthree$. 

	The reduction first transforms each clause of the given $\oneinthree$ instance into a quadripartite conjunction of clauses while preserving the $\oneinthree$ satisfiability, and then conjuncts them. 
	Each conjunction is designed so that it admits a partition of its variables into four disjoint sets one of which contains all variables of its source clause. 
	The $i$-th clause $\{x, y, z\}$ is converted into the following conjunction of 13 clauses: 
	\begin{equation}\label{eq:quad_1in3SAT_conjunction}
	\begin{array}{l}
		\{\neg x, a_1, b_1\} \{\neg y, a_2, b_2\} \{\neg z, a_3, b_3\} \{a_1, a_2, a_3\} \\ 
		\hspace*{15mm} \{\neg x, c, d_x\} \{\neg y, c, d_y\} \{\neg z, c, d_z\} \{c, e_{xy}. f_{xy}\} \{c, e_{yz}. f_{yz}\} \\ 
		\hspace*{30mm} \{c, e_{zx}. f_{zx}\} \{d_x, d_y, e_{xy}\} \{d_y, d_z, e_{yz}\} \{d_z, d_x, e_{zx}\}, 
	\end{array}
	\end{equation}
	where all variables but $\neg x, \neg y, \neg z$ are introduced exclusively for this clause. 
	This transformation introduces the negated literals $\neg x, \neg y, \neg z$ but they do not cause any problem because after all clauses being converted in this manner, no positive literals in the given $\oneinthree$ instance remains. 

	We claim that this preserves the $\oneinthree$ satisfiability. 
	If the $i$-th clause is $\oneinthree$ satisfied, then due to the symmetry among $x, y, z$ in \eqref{eq:quad_1in3SAT_conjunction}, it suffices to examine the case $x = 1, y = z = 0$. 
	Then this conjunction is $\oneinthree$ satisfied by setting $a_1, d_x, e_{yz}, f_{xy}, f_{zx}$ to be 1 and the others to be 0. 
	If $x = y = z = 0$, then the first three clauses in \eqref{eq:quad_1in3SAT_conjunction} imply $a_1 = a_2 = a_3 = 0$, but then the fourth clause cannot be satisfied. 
	In the case $x = y = 1$, we consider two cases depending on the value of $c$. 
	If $c = 0$, then $d_x = d_y = 1$ must hold in order to satisfy $\{\neg x, c, d_x\}$ and $\{\neg y, c, d_y\}$, but then the clause $\{d_x, d_y, e_{xy}\}$ contains too many true variables. 
	Otherwise, $d_x = d_y = e_{xy} = 0$, but then the clause cannot be satisfied. 
	Therefore, the clause $\{x, y, z\}$ is $\oneinthree$ satisfiable if and only if so are all clauses in \eqref{eq:quad_1in3SAT_conjunction}. 

	The conjunction \eqref{eq:quad_1in3SAT_conjunction} is actually not quadripartite yet, and hence, needs further transformation. 
	We replace its last clause $\{d_z, d_x, e_{zx}\}$ using the conversion 
	\begin{equation}\label{eq:quad_1in3SAT_conversion}
		\{\alpha, \beta, \gamma\} = \{\neg \alpha, h, k\}\{\neg \beta, i, k\}\{\neg \gamma, j, k\}\{h, i, j\}, 
	\end{equation}
	where $h, i, j, k$ are auxiliary variables introduced exclusively for this conversion. 
	This conversion preserves the $\oneinthree$ satisfiability and quadripartite property\footnote{
		We have not applied this (simple) conversion directly to the $i$-th clause. 
		This is because the variables involved in the conjunction thus obtained cannot be divided into four sets such that one of them contains all of $\neg x, \neg y, \neg z$. 
	}. 
	A problem is that the resulting conjunction contains both positive and negative literals of $d_z, d_x, e_{zx}$. 
	Thus, to each of the four clauses that replaced $\{d_z, d_x, e_{zx}\}$, the same conversion must apply further. 
	In this way, we finally obtain a conjunction of 28 clauses that admits quartering its variables as follows:  
	\begin{eqnarray*}
		U_{i,1} &=& \{\neg x, \neg y, \neg z, e_{xy}, e_{yz}, e_{zx}\}, \\
		U_{i,2} &=& \{a_1, b_3, c\}, \\
		U_{i,3} &=& \{a_2, b_1, d_x, d_z, f_{xy}, f_{yz}, f_{zx}\}, \\
		U_{i,4} &=& \{a_3, b_2, d_y\},
	\end{eqnarray*}
	where the variables introduced via the conversion of last clause are omitted. 
	Observe that the (negated) literals of all variables in the given SAT instance is in the same set $U_{i, 1}$, and that $(U_{i, 2} \cup U_{i, 3} \cup U_{i, 4}) \cap (U_{j, 2} \cup U_{j, 3} \cup U_{j, 4}) = \emptyset$ for any $1 \le i < j \le n$. 
	These two properties immediately bring us the quartering of variables occurring in the resulting conjunction as $U_k = \bigcup_{1 \le i \le n} U_{i, k}$ for each $k \in \{1, 2, 3, 4\}$. 
\end{proof}

\begin{theorem}\label{thm:4TP_NP-complete}
	For any $\tau \ge 4$, $\tau$-$\TP(4, 3)$ is \NP-complete. 
\end{theorem}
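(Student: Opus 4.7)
The argument has two parts. Membership in \NP{} is immediate: every variable that appears in some $<_\tau$-inequality is bounded above by $\tau - 1$, and a variable appearing only in $\ge_\tau$-inequalities may be set to $\tau$ without loss of generality, so any satisfying assignment has polynomial bit length (recall $\tau$ is a fixed constant) and is verifiable by straightforward arithmetic. For \NP-hardness I would reduce from quadripartite $\oneinthree$, whose \NP-completeness has just been established in Lemma~\ref{lem:quad_1in3SAT}.

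Given a quadripartite $\oneinthree$ instance $\phi$ with Boolean variables partitioned into $U_1, U_2, U_3, U_4$ and each clause drawing one literal from three of these four subsets, I would construct a $\tau$-$\TP(4, 3)$ instance as follows. For each Boolean variable $x$, introduce a nonnegative integer variable $v_x$ and force $v_x \in \{0, 1\}$ by the $<_\tau$-inequality $v_x + z_x < \tau$, where $z_x$ is an auxiliary variable whose lower bound $\tau - 2$ is enforced by a short cascade of the positivity sub-system~\eqref{eq:positive} combined with upper-bound inequalities of the form $u + y_1 + y_2 < \tau$. For each clause $C = \{x, y, z\}$ with missing subset index $i_4$, introduce an auxiliary variable $w_C$ (assigned color $i_4$) forced to the exact value $\tau - 1$, and add the 4-term inequality $v_x + v_y + v_z + w_C \ge \tau$. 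For each of the three pairs $\{x', y'\} \subseteq C$, introduce $c_{x', y'}$ with forced lower bound $\tau - 2$, and add the 3-term inequality $v_{x'} + v_{y'} + c_{x', y'} < \tau$.

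Correctness is a routine case analysis. Combining $w_C = \tau - 1$ with the 4-term clause constraint gives $v_x + v_y + v_z \ge 1$ (``at least one true''), while combining $c_{x', y'} \ge \tau - 2$ with the 3-term pair constraint gives $v_{x'} + v_{y'} < 2$, i.e.\ at most one of each pair is true. Jointly these enforce that exactly one of $v_x, v_y, v_z$ equals $1$ in every clause, so feasible TP solutions restrict (on the $v_x$ coordinates) to exactly the $\oneinthree$-satisfying assignments of $\phi$; conversely, any such assignment extends to a TP solution by setting each auxiliary variable to its forced value. Since $\tau$ is constant and each Boolean variable and each clause of $\phi$ contributes only $O(1)$ inequalities, the reduction runs in polynomial time.

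The step I expect to be the main obstacle is the careful bookkeeping required to build the ``forced-value'' and ``forced lower bound'' sub-systems for the auxiliary variables $z_x$, $w_C$, and $c_{x', y'}$ strictly within the $(4, 3)$ term budget for every $\tau \ge 4$: pinning $w_C$ to the exact value $\tau - 1$ chains several layers of positivity and upper-bound tricks, and one must verify that no intermediate inequality exceeds four terms on the $\ge_\tau$ side or three on the $<_\tau$ side. These cascades have constant size for fixed $\tau$, so they do not affect the polynomial time bound, but writing them out uniformly in $\tau \ge 4$ is where the care sits.
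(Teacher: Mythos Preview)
Your proposal is correct and follows essentially the same route as the paper: reduce from (quadripartite) $\oneinthree$ by translating each clause into one $\ge_\tau$-inequality and three pairwise $<_\tau$-inequalities, padded by auxiliary constants so that the constraints encode the exactly-one-true condition. The paper's encoding is slightly slicker in two respects. First, it uses the value set $\{1,2\}$ rather than $\{0,1\}$ for the Boolean variables, so that at $\tau=4$ the clause constraints $v_{j_1}+v_{j_2}+v_{j_3}\ge 4$ together with $v_{j_i}+v_{j_k}<4$ and positivity already force exactly one variable to equal~$2$ with no padding at all; for general $\tau>4$ a \emph{single} shared constant $x_{\tau-4}$ is added uniformly to every clause inequality. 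Second, that constant is manufactured via the $2^{i+1}$-adder machinery of~\eqref{eq:ChenDotySeki2011} already developed in Section~\ref{subsec:TP_preliminaries}, so the ``forced-value'' bookkeeping you rightly flag as the main obstacle is dispatched by citation rather than by an ad~hoc cascade. Your per-clause constants $w_C$, $c_{x',y'}$ and a linear-in-$\tau$ cascade would work just as well for the theorem as stated (since $\tau$ is fixed), but the paper's version also keeps the resulting system quadripartite, which it needs downstream in Section~\ref{sec:complexity}.
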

\begin{proof}
	A proof for $\tau=4$ comes first.
        An instance of quadripartite\footnote{The quadripartite property is not needed here, but will be so in Section~\ref{sec:complexity}.} $\oneinthree$, whose \NP-completeness was proved in Lemma~\ref{lem:quad_1in3SAT}, will be reduced into an instance of $\tau$-TP(4,~3). 
	Let us represent this instance as a pair of a set of Boolean variables $U = \{u_1, \ldots, u_n\}$ and that of clauses $C = \{c_1, \ldots, c_m\}$. 

        Let us convert this SAT instance into a system $\mathcal{S}$ of $\tau$-inequalities with positive integer variables $v_1, v_2, \ldots, v_n$ (needless to say, \eqref{eq:positive} is used here for their positiveness), which correspond to the SAT variables in $U$, such that the SAT instance is satisfiable if and only if the system is solvable. 
        In $\mathcal{S}$, the $j$-th clause of $C$, $c_j = \{u_{j_1}, u_{j_2}, u_{j_3}\}$ with $1 \le j_1, j_2, j_3 \le n$, is represented as 
	\begin{equation}\label{eq:clause_4}
		v_{j_1} + v_{j_2} + v_{j_3} \ge 4, v_{j_1} + v_{j_2} < 4, v_{j_1} + v_{j_3} < 4, \mbox{ and } v_{j_2} + v_{j_3} < 4, 
	\end{equation}
	which is equivalent to the equation $v_{j_1} + v_{j_2} + v_{j_3} = 4$ due to the assumption that $v_{j_1}, v_{j_2}, v_{j_3} \ge 1$. 
	Its solution must be that exactly one of the three variables is 2 and the others are 1.
	Therefore, if $\mathcal{S}$ is solvable, then by interpreting those in $v_1, \ldots, v_n$ with value 2 be positive and the other (that is, with value 1) negative, we can retrieve a way to satisfy the $\oneinthree$ instance; and vice versa. 
	Thus, 4-TP(4, 3) is \NP-complete (in fact, we proved that even 4-TP(3, 2) is so, see \eqref{eq:clause_4}).

	Now the result is generalized for an arbitrary $\tau \ge 4$.
	In $\mathcal{S}$, we first embed systems of $\tau$-inequalities of at most three terms, presented in Section~\ref{subsec:TP_preliminaries}, that provide auxiliary variables $x_1, x_2, x_{\tau{-}4}$ with lower bounds 1, 2, and $\tau{-}4$, respectively.
	Combining them with an inequality 
	\begin{equation}\label{eq:tau-4}
		x_1 + x_2 + x_{\tau{-}4} < \tau
	\end{equation} 
	yields $x_1 = 1$, $x_2 = 2$, and $x_{\tau{-}4} = \tau{-}4$. 
	This means that by \eqref{eq:tau-4} being embedded, in any solution of $\mathcal{S}$, these variables must admit these respective values.  
	Especially, the ``constant'' $x_{\tau{-}4}$ is added to the inequalities in \eqref{eq:clause_4}, and we obtain 
	\begin{eqnarray}
		v_{j_1} + v_{j_2} + v_{j_3} + x_{\tau{-}4} &\ge& \tau, \label{eq:clause_atleast_tau} \\
		v_{j_1} + v_{j_2} + x_{\tau{-}4} &<& \tau, \label{eq:clause_lessthan_tau}
	\end{eqnarray}
	and the analogs of \eqref{eq:clause_lessthan_tau} for $v_{j_1} + v_{j_3}$ and $v_{j_2} + v_{j_3}$. 
	Since $x_{\tau-4} = \tau-4$, these four $\tau$-inequalities are equivalent to $v_{j_1} + v_{j_2} + v_{j_3} = 4$.
	We conclude this proof by noting that any $\tau$-inequality used is either a $\ge_\tau$-inequality of at most 4 terms or a $<_\tau$-inequality of at most 3 terms, and the resulting system of $\tau$-inequalities is quadripartite. 
	This quadripartite property will become critical in proving Theorem~\ref{thm:TC_above_4_NPhard}. 
\end{proof}

Theorem~\ref{thm:4TP_NP-complete} leads us to the \NP-hardness of TP. 
Deleting the constant term $x_{\tau-4}$ from the inequalities \eqref{eq:tau-4}, \eqref{eq:clause_atleast_tau}, and \eqref{eq:clause_lessthan_tau} yields the following inequalities:  
\begin{equation}\label{eq:TP_x1x2}
	x_1 + x_2 < \tau, 
\end{equation}
\begin{equation}\label{eq:TP_jth_clause_lessthan_tau}
	v_{j_1} + v_{j_2} + v_{j_3} \ge \tau, v_{j_1} + v_{j_2} < \tau, v_{j_1} + v_{j_3} < \tau, \mbox{ and } v_{j_2} + v_{j_3} < \tau. 
\end{equation}
Consider optimizing $\tau$ subject to these. 
Due to \eqref{eq:TP_x1x2}, the minimal possible value of $\tau$ is 4. 
Then its optimal value is 4 if and only if the $\oneinthree$ instance is satisfiable. 

\begin{lemma}\label{lem:TP43_NPhard}
	$\TP(4, 3)$ is \NP-hard. 
\end{lemma}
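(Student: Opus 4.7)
The plan is to give a polynomial-time Karp reduction from quadripartite $\oneinthree$ (which is \NP-complete by Lemma~\ref{lem:quad_1in3SAT}) to the optimization version of $\TP(4,3)$, essentially lifting the construction of Theorem~\ref{thm:4TP_NP-complete} from the fixed-$\tau$ setting to the optimization setting. Given a quadripartite $\oneinthree$ instance with variables $u_1, \ldots, u_n$ and clauses $c_1, \ldots, c_m$, I would introduce positive integer variables $v_1, \ldots, v_n$ (one per SAT variable, each forced positive via an instance of \eqref{eq:positive}) together with the auxiliary variables $x_1, x_2$, where the sub-systems of Section~\ref{subsec:TP_preliminaries} are embedded to force $x_1 \ge 1$ and $x_2 \ge 2$. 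I then take the full $\tau$-inequality system to be the union of (i) those positivity and adder inequalities, (ii) the single inequality \eqref{eq:TP_x1x2}, and (iii) for each clause $c_j = \{u_{j_1}, u_{j_2}, u_{j_3}\}$ the four inequalities in \eqref{eq:TP_jth_clause_lessthan_tau}. Every $\ge_\tau$-inequality has at most four terms (the maximum being the clause sum $v_{j_1}+v_{j_2}+v_{j_3} \ge \tau$) and every $<_\tau$-inequality has at most three terms, so the constructed instance indeed lies in $\TP(4,3)$.

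The heart of the argument is to show that the optimal $\tau$ of this system equals $4$ iff the $\oneinthree$ instance is satisfiable. The lower bound $\tau \ge 4$ follows at once from \eqref{eq:TP_x1x2} combined with $x_1 \ge 1$, $x_2 \ge 2$. For the ``if'' direction at $\tau = 4$: given a $\oneinthree$-satisfying assignment, set $v_i = 2$ when $u_i$ is true and $v_i = 1$ otherwise, set $x_1 = 1$, $x_2 = 2$, and fill in the adder internals using the explicit recipe from Section~\ref{subsec:TP_preliminaries}; then each clause contributes $v_{j_1}+v_{j_2}+v_{j_3} = 2+1+1 = 4$ with every pairwise sum at most $3 < 4$, so \eqref{eq:TP_jth_clause_lessthan_tau} is satisfied. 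For the ``only if'' direction: if the system is feasible at $\tau = 4$, then $v_{j_1}+v_{j_2}+v_{j_3} \ge 4$ together with the three pairwise inequalities $< 4$ and $v_{j_k} \ge 1$ force $\{v_{j_1}, v_{j_2}, v_{j_3}\}$ to be a permutation of $(2,1,1)$, so declaring $u_i$ true exactly when $v_i = 2$ yields a $\oneinthree$ satisfying assignment. Thus the optimal $\tau$ is $4$ when the SAT instance is satisfiable and strictly greater than $4$ otherwise, giving the required reduction.

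The only point requiring care is checking that the embedded sub-systems enforcing $x_1 \ge 1$ and $x_2 \ge 2$ are themselves feasible at $\tau = 4$, so that they do not inadvertently push the optimum above $4$. This is covered by the explicit feasible assignment exhibited at the end of Section~\ref{subsec:TP_preliminaries}, which works for any $\tau$ at least one above the lower bound being produced and, in particular, for $\tau = 4$ with lower bounds $1$ and $2$. Polynomial size of the reduction is immediate: the adders here use only $O(1)$ variables since their target lower bounds are constants, and each SAT clause contributes a constant number of inequalities, so the whole construction is linear in the input size.
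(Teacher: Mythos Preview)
Your proposal is correct and follows essentially the same approach as the paper: the paper's argument (given in the paragraph immediately preceding the lemma) likewise deletes the $x_{\tau-4}$ term from the Theorem~\ref{thm:4TP_NP-complete} construction, keeps \eqref{eq:TP_x1x2} to force $\tau \ge 4$, uses the clause inequalities \eqref{eq:TP_jth_clause_lessthan_tau}, and concludes that the optimum equals $4$ iff the $\oneinthree$ instance is satisfiable. Your write-up is in fact more explicit than the paper's terse sketch (you spell out the permutation-of-$(2,1,1)$ argument and verify feasibility of the auxiliary sub-systems at $\tau=4$), but the underlying reduction is identical; one small overstatement is the phrase ``strictly greater than $4$ otherwise'' --- infeasibility is also possible in the unsatisfiable case, but this does not affect the reduction.
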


\begin{remark}\label{rmk:strictly_lessthan_tau}
	All the systems of $\tau$-inequalities designed in this section so far can be solved even subject to extra condition that all variables being strictly less than $\tau$. 
	This extra condition does not prevent the systems from playing their intended roles when being embedded. 
\end{remark}

Using this instance, now we can prove that $\findoptstrength$ is \NP-hard. 
Making use of the fact that all of its $\tau$-inequalities contain at most 4 terms, we transform them into cooperation sets of a strength-free TAS. 
The inequalities \eqref{eq:TP_jth_clause_lessthan_tau}, which are for the clause $c_j$, are encoded as $t_{c_j} = (v_{j_1}, v_{j_2}, v_{j_3}, x_\tau)$ with $\mathcal{D}(t_{c_j}) = \mathcal{P}(\{\north, \west, \south, \east\}) \setminus \{\{\north\}, \{\west\}, \{\south\}, \{\north, \west\}, \{\north, \south\}, \{\west, \south\}\}$. 
Note that its east side is labeled with an auxiliary variable $x_\tau$, but this variable does not play any essential role but filling the blank side (the tile is not triangle but square). 
Likewise, the inequalities that aim at forcing $v_1, \ldots, v_n$ be positive are converted. 
The inequalities in \eqref{eq:positive} are encoded as a tile type $t_i = (x_1, x_a, x_\tau, x_\tau)$ with $\mathcal{D}(t_i) = \mathcal{P}(\{\north, \west, \south, \east\}) \setminus \{\{\north\}, \{\west\}\}$. 
Though being not mentioned in \eqref{eq:positive}, we can assume $x_1 < \tau$ as noted in Remark~\ref{rmk:strictly_lessthan_tau}. 
The inequality \eqref{eq:TP_x1x2} is encoded as $t' = (x_1, x_2, x_\tau, x_\tau)$ with $\mathcal{D}(t') = \mathcal{P}(\{\north, \west, \south, \east\}) \setminus \{\{\north\}, \{\west\}, \{\north, \west\}\}$. 
Then, there exists a TAS at temperature 4 that is locally equivalent to this strength-free TAS if and only if the given instance of $\oneinthree$ is satisfiable.

\begin{theorem}\label{thm:findoptstrength_NPhard}
	$\findoptstrength$ is \NP-hard. 
\end{theorem}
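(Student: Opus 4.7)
The plan is to reduce from the $\TP(4, 3)$ instance produced in the proof of Lemma~\ref{lem:TP43_NPhard} (which already encodes a quadripartite $\oneinthree$ instance). The key structural coincidence that makes the reduction work is that a tile has exactly four sides, so the cooperation set $\mathcal{D}_{g,\tau}(t) \subseteq \mathcal{P}(\{\north,\west,\south,\east\})$ is precisely the right object to encode a conjunction of $\ge_\tau$- and $<_\tau$-inequalities in at most four terms, one term per side. Concretely, a $\ge_\tau$-inequality whose terms are the glue-strengths of the sides in a subset $D$ corresponds to including $D$ (and all its supersets) in $\mathcal{D}(t)$, while a $<_\tau$-inequality over a subset $D'$ corresponds to excluding $D'$ (and all its subsets) from $\mathcal{D}(t)$.

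First I would fix the instance: the system $\mathcal{S}$ built in Lemma~\ref{lem:TP43_NPhard} consisting of the inequality \eqref{eq:TP_x1x2}, the four inequalities \eqref{eq:TP_jth_clause_lessthan_tau} for every clause $c_j$, and the positivity pairs \eqref{eq:positive}. Using the observation of Remark~\ref{rmk:strictly_lessthan_tau}, I may assume every variable is itself strictly less than $\tau$, so each variable is a legitimate glue-strength in a TAS. Next I would build the strength-free TAS $\tas_{\rm sf} = (T, \sigma, \mathcal{D})$: one tile type $t_{c_j}$ per clause with sides labeled by $v_{j_1}, v_{j_2}, v_{j_3}$ and a padding label $x_\tau$; one tile type $t_i$ per variable $v_i$ realising the positivity pair via two ``information'' sides and two padded sides; and one tile type $t'$ realising \eqref{eq:TP_x1x2}. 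In each case the cooperation set is taken to be the up-set determined by which side-subsets appear on the $\ge_\tau$ side of the inequalities encoded by that tile, with the padded sides playing no role beyond forcing the tile to be a square. The seed $\sigma$ may be taken to be an arbitrary fixed tile type added to $T$ for syntactic completeness.

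Having produced $\tas_{\rm sf}$ in polynomial time, I would argue the equivalence. If the given $\oneinthree$ instance is satisfiable, the proof of Lemma~\ref{lem:TP43_NPhard} yields a solution of $\mathcal{S}$ with $\tau = 4$; assigning each glue label $\ell$ the value that the corresponding variable takes in this solution, and assigning the padding label $x_\tau$ the strength $\tau$, produces a TAS at temperature $4$ whose cooperation sets match $\mathcal{D}$ exactly, so it is locally equivalent to $\tas_{\rm sf}$. Conversely, any TAS $(T, \sigma, g, \tau^*)$ locally equivalent to $\tas_{\rm sf}$ determines a strength assignment to the labels $v_1, \ldots, v_n, x_1, x_2, x_a, x_\tau$ that solves $\mathcal{S}$; by Lemma~\ref{lem:TP43_NPhard} this forces $\tau^* \ge 4$, and a value of exactly $4$ yields a satisfying assignment of the $\oneinthree$ instance. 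Hence computing the minimum temperature of any TAS locally equivalent to $\tas_{\rm sf}$ decides $\oneinthree$, giving the \NP-hardness of $\findoptstrength$.

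The main obstacle I anticipate is bookkeeping rather than mathematical depth: I must check that the cooperation sets I write down for the clause tiles, positivity tiles, and the $\{x_1, x_2\}$-tile are genuinely \emph{closed under supersets} and do not accidentally encode stronger constraints than those in $\mathcal{S}$ (in particular, that the padding label $x_\tau$ really is innocuous, which is why taking $x_\tau \ge \tau$ in the realising TAS, or adding a dedicated ``$\ge \tau$'' witness via \eqref{eq:positive}-style inequalities, is important). A secondary subtlety is ensuring that forbidden subsets on one tile do not over-constrain the common glue labels shared across different tiles; this is handled by the fact that labels appear only in the sides dictated by the encoding, and distinct tiles share a label only through its numerical strength, which is precisely the variable value in $\mathcal{S}$.
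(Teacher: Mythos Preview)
Your proposal is correct and follows essentially the same route as the paper: the paper likewise encodes each clause of the $\TP(4,3)$ instance from Lemma~\ref{lem:TP43_NPhard} as a tile $t_{c_j} = (v_{j_1}, v_{j_2}, v_{j_3}, x_\tau)$ with the cooperation set you describe, encodes the positivity pairs \eqref{eq:positive} as tiles $t_i = (v_i, x_a, x_\tau, x_\tau)$, encodes \eqref{eq:TP_x1x2} as a tile $t' = (x_1, x_2, x_\tau, x_\tau)$, and invokes Remark~\ref{rmk:strictly_lessthan_tau} for exactly the reason you note. Your anticipated bookkeeping concerns (superset closure, the role of the padding label $x_\tau$, and non-interference across tiles sharing a glue) are precisely the points the paper handles, and your resolution of them is the same.
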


	\section{Tile complexity at high temperatures}
	\label{sec:complexity}

In Section~\ref{sec:behavior}, bridges from $\oneinthree$ to TP(4, 3) and further to the {\it local}, or {\it micro}, behavior of a TAS have been established.  
Since the study of TAS aims at facilitating the design of nano-scale structures, we should shift our focus onto their {\it global} (or {\it macro, terminal}) behavior; how a given shape is built by TASs. 
Problems of interest include: for any temperature $\tau$ given as parameter, 
\begin{enumerate}
\item	Is there a shape $S_\tau$ that prefers the temperatures above $\tau$ to the lower ones in terms of tile complexity; that is, $\dtilec{< \tau}(S_\tau) > \dtilec{\tau}(S_\tau)$? 
	If so, then can we design an algorithm to construct $S_\tau$? 
\item	Can we compute the directed tile complexity of a shape $S$ at the temperatures below $\tau$, that is, $\dtilec{\le \tau}(S)$, in a polynomial time? 
\end{enumerate}
For $\tau = 2$, these problems have been studied intensively. 
Adleman et al.~proved that $\dtilec{2}(\square{n}) = O(\frac{\log n}{\log \log n})$ for any $n \times n$ square $\square{n}$ \cite{AdChGoHu2001}. 
In contrast, $\dtilec{1}(\square{n}) \le 2n-1$ and this bound is conjectured to be tight \cite{AdChGoHuKeEsRo2002}, which is highly probable. 
Thus, $\dtilec{< 2}(\square{n}) \gg \dtilec{2}(\square{n})$, provided the conjecture is true. 
As for the second problem, it is \NP-hard to compute the directed tile complexity at the temperatures at most 2 \cite{AdChGoHuKeEsRo2002}. 

We will work on these problems without any constraint on temperature, and answer them by proving the following two theorems. 

\begin{theorem}\label{thm:tau_better_than_less}
	For any $\tau \ge 2$, there is a shape $S_\tau$ whose directed tile complexity is strictly lower at the temperature $\tau$ than at any temperatures below $\tau-1$. 
\end{theorem}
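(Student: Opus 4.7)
The strategy is to instantiate the framework promised by Property~\ref{property:TAS_as_inequality_system_solver}: for each target temperature $\tau$, construct a system $\mathcal{S}_\tau$ of $\tau$-inequalities whose minimum feasible threshold is exactly $\tau$, and then let $S_\tau$ be the shape that the framework produces from $\mathcal{S}_\tau$. The second sentence of the fourth paragraph of the introduction states precisely this reduction; the work here is to exhibit an appropriate $\mathcal{S}_\tau$ and to complete the lower-bound side.

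To build $\mathcal{S}_\tau$, I would combine the sub-system constructions of Section~\ref{subsec:TP_preliminaries}. Starting from a variable forced to be a positive integer via the gadget \eqref{eq:positive}, I would stack a suitable chain of $2^{i+1}$-adders to raise the lower bound of some variable $y$ up to $\tau-2$, using the fact (noted after \eqref{eq:recurrence_relation_on_Ai}) that arbitrary lower bounds can be attained by inductively combining adders at a total cost of $O((\log \tau)^2)$ variables. Closing the system with a single inequality of the form $y + x_a < \tau$, where $x_a$ is in turn forced to be positive via \eqref{eq:positive}, yields a system that admits a solution at threshold $\tau$ (explicitly constructible by the assignment in Section~\ref{subsec:TP_preliminaries}) but is infeasible at every threshold strictly below $\tau$.

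Feeding $\mathcal{S}_\tau$ into Property~\ref{property:TAS_as_inequality_system_solver} then delivers a shape $S_\tau$ together with a directed TAS $\tas_\tau$ of ``reasonable size''---polynomial in $\log \tau$ by the variable count above---that strictly self-assembles $S_\tau$ at temperature $\tau$. This yields the upper bound $\dtilec{\tau}(S_\tau) \le |\tas_\tau|$ that one side of the theorem requires.

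The main obstacle, as I see it, is the matching lower bound: arguing that no directed TAS at any temperature $\tau' < \tau - 1$ strictly self-assembles $S_\tau$ with at most $|\tas_\tau|$ tile types. The expected route is that the conversion in Property~\ref{property:TAS_as_inequality_system_solver} is faithful enough that any directed TAS strictly self-assembling $S_\tau$ at temperature $\tau'$ must, through its strength function and cooperation sets, realize a solution of $\mathcal{S}_\tau$ at threshold $\tau'$; since $\mathcal{S}_\tau$ has no solution whenever $\tau' < \tau$, such a TAS cannot reuse the compact adder-based layout of $\tas_\tau$ and must instead cover the corresponding portion of $S_\tau$ with strictly more tile types. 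Making this rigorous will mean pinpointing which positions of $S_\tau$ carry the adder ``signature'' and showing geometrically that they cannot be covered with fewer tile types in the absence of a valid low-threshold solution; I expect this is where essentially all of the detailed work of the proof lies, the earlier steps being assembly of already-established building blocks.
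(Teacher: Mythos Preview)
Your proposal is correct and follows essentially the same route as the paper: choose the adder-based system of Section~\ref{subsec:TP_preliminaries} (whose minimum feasible threshold is the target value), feed it into the Steps~1--4 framework to obtain $S_\tau$, and invoke Property~\ref{property:TAS_as_inequality_system_solver} for the lower bound. One small clarification: Property~\ref{property:TAS_as_inequality_system_solver} is itself the lower-bound statement (any small TAS at temperature $\tau'$ yields a solution of $\mathcal{S}$ at $\tau'$), not the upper bound; the existence of a small TAS at temperature $\tau$ is the easy direction, handled by the explicit one-tile-per-component construction the paper gives while proving each item of Property~\ref{property:TAS_as_inequality_system_solver}, so your ``main obstacle'' paragraph is really a restatement of the Property rather than work beyond it.
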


\begin{theorem}\label{thm:TC_above_4_NPhard}
	For any $\tau \ge 4$, it is \NP-hard to compute the directed tile complexity of a shape at the temperatures below $\tau$. 
\end{theorem}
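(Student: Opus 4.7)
The plan is to reduce quadripartite $\oneinthree$ to the decision version of computing $\dtilec{\le \tau}$: given an instance $\phi$ of quadripartite $\oneinthree$ and the target $\tau \ge 4$, I construct in polynomial time a shape $S_\phi$ together with an integer threshold $K$ such that $\phi$ is $\oneinthree$-satisfiable if and only if $\dtilec{\le \tau}(S_\phi) \le K$. Since quadripartite $\oneinthree$ is \NP-hard by Lemma~\ref{lem:quad_1in3SAT}, this will yield the \NP-hardness of computing $\dtilec{\le \tau}(S_\phi)$, and a fortiori the \NP-hardness of the function problem of computing $\dtilec{\le \tau}$ for arbitrary input shapes.

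First I would apply the reduction in the proof of Theorem~\ref{thm:4TP_NP-complete} to $(\phi, \tau)$ so as to obtain a quadripartite system $\mathcal{S}_\phi$ of $\tau$-inequalities (with $\ge_\tau$-inequalities of at most four terms and $<_\tau$-inequalities of at most three terms) that is $\tau$-solvable exactly when $\phi$ is satisfiable. I would then augment $\mathcal{S}_\phi$, using the $2^{i+1}$-adders of Section~\ref{subsec:TP_preliminaries}, with auxiliary inequalities that pin constants so that the threshold must equal the prescribed $\tau$: this rules out every $k$-solution with $k < \tau$, and hence ensures that $\mathcal{S}_\phi$ is $k$-solvable for some $k \le \tau$ if and only if it is $\tau$-solvable, i.e., precisely when $\phi$ is satisfiable. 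Feeding $\mathcal{S}_\phi$ into the framework of Property~\ref{property:TAS_as_inequality_system_solver} then yields both the shape $S_\phi$ and the polynomial bound $K$ on its tile complexity; the easy direction of the reduction is immediate, since whenever $\phi$ is satisfiable, Property~\ref{property:TAS_as_inequality_system_solver} delivers a directed temperature-$\tau$ TAS with at most $K$ tile types that strictly self-assembles $S_\phi$, witnessing $\dtilec{\le \tau}(S_\phi) \le K$.

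The hard direction is the converse, and I expect it to be the main obstacle. I must argue that any directed TAS $\tas = (T, \sigma, g, k)$ with $k \le \tau$ and $|T| \le K$ that strictly self-assembles $S_\phi$ induces, through its strength function $g$ and temperature $k$, a nonnegative integer vector satisfying every inequality of $\mathcal{S}_\phi$ at threshold $k$. This demands a shape $S_\phi$ rigid enough that any economical TAS assembling it is forced to reuse a canonical family of ``inequality-gadget'' tile types whose glue strengths directly witness such a solution; the quadripartite structure of $\mathcal{S}_\phi$ is what makes this rigidity attainable, since the four variable classes align with the four tile sides \north, \west, \south, \east and let a single tile type encode a single inequality locally. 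Establishing this lower bound on $|T|$ — promoting the one-sided guarantee of Property~\ref{property:TAS_as_inequality_system_solver} into the biconditional needed for the reduction — is where the bulk of the technical work lies, and it will proceed by careful bookkeeping of which tile types can possibly occupy which positions of $S_\phi$ in any directed producible assembly, exploiting the gadgetry of the framework so that the $K$-tile budget is met exactly when the strength assignment solves $\mathcal{S}_\phi$.
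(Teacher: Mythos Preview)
Your proposal follows essentially the same route as the paper: take the quadripartite $\tau$-inequality system built in the proof of Theorem~\ref{thm:4TP_NP-complete}, feed it through the Step~1--4 framework to obtain the shape, and read off both directions of the reduction. Two small clarifications are in order, though neither is fatal.

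First, the augmentation step you propose is redundant. The system $\mathcal{S}$ constructed in Theorem~\ref{thm:4TP_NP-complete} for a prescribed $\tau$ \emph{already} embeds the adder subsystems that force any feasible threshold to be at least $\tau$: the constant $x_{\tau-4}$ is built up via adders to have lower bound $\tau-4$, and together with $x_1 \ge 1$, $x_2 \ge 2$ and inequality~\eqref{eq:tau-4} this forces the threshold to exceed $\tau-1$. So there is nothing to add; any $k$-solution with $k \le \tau$ must have $k=\tau$.

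Second, you have inverted the content of Property~\ref{property:TAS_as_inequality_system_solver}. That property \emph{is} the hard direction: it asserts that any directed TAS assembling $S$ with at most $(n+c')h+c$ tile types exposes glue labels whose strengths solve $\mathcal{S}$ at the TAS's temperature. The ``careful bookkeeping of which tile types can possibly occupy which positions'' you describe as remaining work is exactly the proof of Property~\ref{property:TAS_as_inequality_system_solver}, carried out in the paper's discussion of Steps~2 and~3. Conversely, the easy direction (a solution to $\mathcal{S}$ yields a small TAS for $S$) is \emph{not} what Property~\ref{property:TAS_as_inequality_system_solver} states; it is argued separately and informally in those same Step~2--3 paragraphs, where the paper explains how each logical component can be assembled with one fresh tile type once the cooperation-tip glues are set according to a solution.
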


Theorem~\ref{thm:tau_better_than_less} is a positive answer to the first problem. 
Based on the results obtained on TP(4,~3), we propose a design of a shape $S_\tau$ for a given $\tau \ge 2$ that prefers the temperature $\tau$ to the ones strictly below in terms of tile complexity; that is, $\dtilec{< \tau}(S_\tau) > \dtilec{\tau}(S_\tau)$. 
This should be, as of now, interpreted as an infeasibility result that the arbitrarily fine control of binding energies is necessitated to design TASs of ``reasonable size'' for a certain shape in a laboratory, which has not yet been realized, at least to my knowledge. 

Not only to this end but this design also makes possible to convert the instance of $\oneinthree$ into a shape $S$ and a constant $c$ such that the instance is satisfiable if and only if there is a directed TAS of at most $c$ tile types that strictly self-assembles $S$ at a temperature below $\tau$ for an arbitrary $\tau \ge 4$; that is, $\dtilec{\le \tau}(S) \le c$. 
This amounts to the proof of Theorem~\ref{thm:TC_above_4_NPhard}, which answers the second problem unless ${\tt P} = \NP$. 
Adleman et al.~proved the analogous result for $\tau = 2$ \cite{AdChGoHuKeEsRo2002}. 
The case $\tau = 3$ remains open. 
This gap must be filled, but our proof cannot be applied to this case, at least directly. 
It is probably essential to transform an instance of 3-$\SAT$ into finely-crafted gadgets for this case as done by Adleman et al.~in \cite{AdChGoHuKeEsRo2002} for the case $\tau = 2$. 
This paper leaves this case open. 

\begin{figure}[tb]
\begin{center}
\includegraphics[scale=0.6]{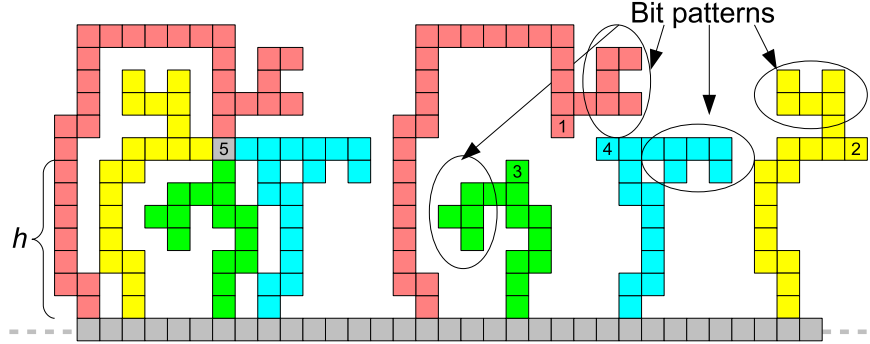}
\end{center}
\caption{
	A logical component for a $\ge_\tau$-inequality of four terms (left) and four variable trees (red, green, blue, and yellow from the left) that correspond to the four variables in the inequality. 
	The positions numbered 1, 2, 3, and 4 are the cooperation tip of these trees, respectively. 
	The logical component consists of a single gray tile at the position numbered 5 and four pillars that are of the shape identical to the four respective variable trees. 
}
\label{fig:4vars_atleast_tau}
\end{figure}

We propose the following unified approach to various problems related to the behaviors and temperatures of directed TASs including the above two problems: 
\begin{enumerate}[Step 1.]
\item	choose a quadripartite system $\mathcal{S}$ of $\tau$-inequalities properly for the purpose among those built in Section~\ref{sec:behavior}; hence, all $\ge_\tau$-inequalities of $\mathcal{S}$ are of at most 4 terms and all of its $<_\tau$-inequalities are of at most 3 terms; 
\item	convert its variables $v_1, v_2, \ldots, v_n$ into trees of height $h$ and distinct shape, which we call {\it variable trees}, where $h$ is a parameter adjustable for our convenience;
\item	for each $\tau$-inequality, bundle the (at most 4) trees thus converted from its variables into a shape called the {\it logical component}; 
\item	(two copies of) the $n$ variable trees and logical components each are mounted next to each other onto a scaffold, and amounts to a shape, which we denote by $S$. 
\end{enumerate}
Examples of variable trees and a logical component are shown in Figure~\ref{fig:4vars_atleast_tau}. 
Note that $S$ is parameterized by $h$. 
The design of the shape $S$ through these four steps will bring a constant $c \ll h$ with the following property after its parameter $h$ being made large enough. 

\begin{property}\label{property:TAS_as_inequality_system_solver}
Any directed TAS $\tas = (T, \sigma, g, \tau)$ that strictly self-assembles $S$ using at most $(n + c')h + c$ tile types has (not-necessarily-distinct) $n$ glue labels $\ell_1, \ell_2, \ldots, \ell_n$ such that 
\begin{itemize}
\item	for any $m \le 4$ and $k_1, \ldots, k_m \in \{1, \ldots, n\}$, if $\mathcal{S}$ includes the $\ge_\tau$-inequality $\sum_{1 \le i \le m} v_{k_i} \ge \tau$, then $\sum_{1 \le i \le m} g(\ell_{k_i}) \ge \tau$; 
\item	for any $m' \le 3$ and $k_1, \ldots, k_{m'} \in \{1, \ldots, n\}$, if $\mathcal{S}$ includes the $<_\tau$-inequality $\sum_{1 \le j \le m'} v_{k_j} < \tau$, then $\sum_{1 \le j \le m'} g(\ell_{k_j}) < \tau$, 
\end{itemize}
where $c'$ is the number of $<_\tau$-inequalities in the system $\mathcal{S}$ chosen at Step 1. 
\end{property}

This property should be interpreted that the way for the function $g$ of such a ``small'' $\tas$ at the temperature $\tau$, if any, to assign labels with glue strengths tells how to satisfy the given system $\mathcal{S}$ of $\tau$-inequality. 
Conversely speaking, unless the given system admits a solution with $\tau$ being some specific value $\tau'$, any directed TAS at the temperature $\tau'$ needs strictly more than $(n+c')h+c$ tile types\footnote{
Actually, it will be proved necessary for $\tas$ to have at least $(n+c'+1)h$ tile types, which is much larger than $(n+c')h+c$ because $h \gg c_2$. 
} in order to strictly self-assemble $S$. 
Verifying this property amounts to proofs of Theorems~\ref{thm:tau_better_than_less} and \ref{thm:TC_above_4_NPhard}; it is sufficient for us to choose a proper $\mathcal{S}$ at Step 1. 
In order to prove Theorem~\ref{thm:tau_better_than_less}, for $k \ge 2$, we choose the system of $\tau$-inequalities that cannot be solved for any $\tau < k$ but can for any $\tau \ge k$. 
This is the one designed in Section~\ref{subsec:TP_preliminaries} based on the $2^{i+1}$-adders. 
It is transformed into a shape $S$ through Steps 2-4. 
This amounts to a proof of a slightly-stronger version of Theorem~\ref{thm:tau_better_than_less}. 
As for Theorem~\ref{thm:TC_above_4_NPhard}, the TP instance built for Theorem~\ref{thm:4TP_NP-complete} is rather chosen. 
In the rest of this section, therefore, we just have to verify Property~\ref{property:TAS_as_inequality_system_solver} in order to complete the proof of these theorems. 

Let $V = \{v_1, v_2, \ldots, v_n\}$ be the set of variables occurring in the system $\mathcal{S}$ chosen at Step 1. 
As mentioned in Remark~\ref{rmk:strictly_lessthan_tau}, we can assume that every variable in $V$ is {\it strictly} less than $\tau$; this assumption simplifies the design of $S$ and our explanation below. 
The more essential is that $\mathcal{S}$ is quadripartite, that is to say, the variable set $V$ can be partitioned into four {\it disjoint} subsets $V_\north, V_\west, V_\south, V_\east$. 

\subsubsection*{Step 2}

In this step, we convert each variable in $V$ into one of the four tree shapes illustrated in Figure~\ref{fig:4vars_atleast_tau} depending on which of $V_\north, V_\west, V_\south, V_\east$ it belongs to. 
For instance, each variable in $V_\north$ is converted into the shape whose tip is numbered 1 (and colored red). 
This shape is a tree of height $h$ plus some constant, not depending on $h$, with only one crotch for two branches; one consists of a bit pattern and the other is of size 1 for cooperation called {\it cooperation tip}, which is numbered 1 in Figure~\ref{fig:4vars_atleast_tau}. 
We say that this shape is of {\it north type} after the subscript of $V_\north$. 
Shapes of north type that are thus converted from distinct variables (in $V_\north$) are identical mod their bit patterns of length $\lceil \log n \rceil$. 
In this way, each variable in the other three variable subsets $V_\west, V_\south, V_\east$ is also converted into the shapes that are numbered 2, 3, 4 (and colored yellow, green, blue) in Figure~\ref{fig:4vars_atleast_tau}, respectively, and we say that these shapes are respectively of {\it west, south, and east type}. 
All the $n$ variables of $V$ have been now associated with $n$ different tree shapes of proper type. 
We collectively refer to them as {\it variable trees}. 
Two copies of each variable tree is mounted onto a scaffold. 

Every conversion of $<_\tau$-inequalities in $\mathcal{S}$ into logical components at Step~3 needs to introduce an auxiliary tree of the north, west, south, or east type and recall that $\mathcal{S}$ was assumed to contain $c'$ of them (the conversion of $\ge_\tau$-inequalities does not). 
These trees are distinguished from each other and from the variable trees by their bit patterns. 
Two copies of each auxiliary tree are mounted next to each other on the scaffold of $S$. 
We will call variable trees and auxiliary trees simply trees unless confusion arises. 

We claim that $\tas$ needs at least $(n+c')h$ tile types for $\tas$ to strictly self-assemble these $n+c'$ trees mainly using the results by Adleman et al.~in \cite{AdChGoHuKeEsRo2002} on tile complexity of trees. 
Being duplicated, each tree has its copy that assembles upward from the scaffold (actually, all trees but at most one do so since $\tas$ is assumed to be singly-seeded). 
Theorem~4.3 in \cite{AdChGoHuKeEsRo2002} implies that in order for $\tas$ to strictly self-assemble two variable trees of different types (e.g., north and west), at least $2h$ plus some constant number of tile types are necessary; $h$ exclusively for each, no matter how and at what temperature $\tas$ is designed, where the constant is much smaller than $h$. 
This is not to say that $\tas$ could not reuse any tile type for both of them. 
It can, for example, for their bit patterns. 
However, the number of such reusable tile types is bounded by a constant that is much smaller than $h$. 
This lower bound exists also for two trees of identical type but with distinct bit patterns. 
It is not allowed for $\tas$ to put tiles of identical type at the crotch positions because if they were the same, incorrect bit patterns could assemble.
In contrast, $\tas$ is able to reuse tiles of same type for the cooperation tips of these trees, but anyway it saves only one tile type. 
This observation is immediately generalized for more variable and auxiliary trees as: in order for $\tas$ to assemble the $n$ distinct variable trees and $c'$ auxiliary trees, at least $(n+c')h + c''$ tile types are necessary for some constant $c'' \ll h$. 
In brief, 
\[
	\dtilec{\tau}(S) \ge (n+c')h + c''
\]
at {\it any temperature} $\tau$. 

Before proceeding to the explanation of next step, we prove that $\tas$ must put the same tile type at the cooperation tip positions of two copies of each variable tree or auxiliary tree. 
This is related to the position of the seed of $\tas$ so that first we see that it merely costs the number of tile types for $\tas$ to put its head on a variable tree or an auxiliary tree. 
Suppose $\tas$ put its seed on one copy of some variable tree, and let us compare it with the other copy of the same tree. 
Recall that they are located next to each other. 
In order to start assembling the seed-free copy, $\tas$ first must assemble downward from the seed and the scaffold between the copies. 
Then consider the process for $\tas$ to assemble the seed-free copy upward up to the counterpart position of the seed.
During this process, $\tas$ cannot reuse any tile type used for the part between the seed and scaffold; such a recycle would enable $\tas$ to skip the assembly of the seed-free copy and produce a shape with only one copy of the variable tree. 
Thus, even if the seed is on a copy of some variable tree, it could not be located ``far from'' the scaffold if it were not for extra $n$ tile types available for $\tas$. 
This means that $\tas$ needs to assemble almost all parts of these trees upward, and hence, it would impose the unaffordable number of extra tile types on $\tas$ to put tiles of distinct types at the cooperation tips of two copies of a tree. 
Now, for $1 \le i \le n$, we can denote {\it the} tile type that $\tas$ puts at the both cooperation tips of the copies of the variable tree for $v_i$ by $t_i$. 
The label $\ell_i$ mentioned in Property~\ref{property:TAS_as_inequality_system_solver} is found at the side of $t_i$ opposite to the crotch. 
That is, 
\[
	\ell_i = 
	\begin{cases}
	t_i(\south) & \text{if $v_i \in V_\north$} \\
	t_i(\east)  & \text{if $v_i \in V_\west$} \\
	t_i(\north) & \text{if $v_i \in V_\south$} \\
	t_i(\west)  & \text{if $v_i \in V_\east$}. 
	\end{cases}
\]

\subsubsection*{Step 3}

In this step, we convert $\ge_\tau$-inequalities and $<_\tau$-inequalities of the given system $\mathcal{S}$ into shapes which we call {\it logical components}. 
Two of their copies will be mounted in Step 4 onto the scaffold, which will complete the design of the shape $S$. 

Let us begin with a simpler one: the logical component for $\ge_\tau$-inequality (of at most 4 terms). 
Recall that any $\ge_\tau$-inequality in $\mathcal{S}$ is of at most 4 terms and none of its two terms are variables taken from the same variable subset. 
That is to say, the variable trees that its terms (variables) correspond to are of pairwise distinct type, and hence, can be bundled together so as for their cooperation tips to be adjacent to one position (see Figure~\ref{fig:4vars_atleast_tau}, where the position is numbered 5).
The logical component for this $\ge_\tau$-inequality consists of the variable trees thus positioned, which we refer to as {\it pillars}, and the position adjacent to all the cooperation tips of the pillars. 
In this manner, each $\ge_\tau$-inequality is converted into a logical component and its two copies are mounted onto the scaffold of $S$. 

\begin{description}

\item[{\it Proof of the first item of Property~\ref{property:TAS_as_inequality_system_solver}}] \ \\
Now we will prove the first item of Property~\ref{property:TAS_as_inequality_system_solver}, that is, if the $\ge_\tau$-inequality 
\begin{equation}\label{eq:at_least_tau_inequality}
	\sum_{1 \le i \le m} v_{k_i} \ge \tau
\end{equation}
belongs to $\mathcal{S}$, then 
\begin{equation}\label{eq:at_least_tau_attachment}
	\sum_{1 \le i \le m} g(\ell_{k_i}) \ge \tau,
\end{equation} 
where $m \le 4$ and $k_1, \ldots, k_m \in \{1, \ldots, n\}$. 
Before that, however, let us quickly show that if \eqref{eq:at_least_tau_attachment} holds, then $\tas$ can assemble the logical component for \eqref{eq:at_least_tau_inequality} by introducing only one new tile type in the following manner. 
$\tas$ first assembles the respective parts of the component that correspond to the variable trees for $v_{k_1}, \ldots, v_{k_m}$ as done for the copies of these trees (hence, costs nothing) and lets their cooperation tips to cooperate to attach a tile of the new type to fill the position surrounded by the $m$ cooperation tips, which is numbered 5 in the logical component in Figure~\ref{fig:4vars_atleast_tau}.
This cooperation provides enough strength for stable tile attachment due to \eqref{eq:at_least_tau_attachment}. 
Let us return to the proof. 
For the sake of contradiction, suppose \eqref{eq:at_least_tau_attachment} does not hold. 
Focus on the seed-free copy of the logical component for \eqref{eq:at_least_tau_inequality} for the concise argumentation. 
Since this copy is free from seed, $\tas$ needs to assemble at least one pillar of the component upward from the scaffold. 
Under the supposition that \eqref{eq:at_least_tau_attachment} did not hold, $\tas$ is not allowed to assemble these upward-growing pillars as done for the corresponding variable trees without any extra tile types. 
More precisely, at the cooperation tip of at least one of these pillars, say the one for $v_{k_i}$, $\tas$ must put a tile of different type from $t_{k_i}$. 
However, this costs extra at least $h$ tile types because $\tau$ cannot reuse any tile types for the tree for $v_{k_i}$ in order to assemble this pillar, or any tile types for the any other tree $v_{k_j}$ because $\tas$ was proved not to be able to assemble $v_{k_i}$ and $v_{k_j}$ using strictly less than $2h$ tile types (this argument works in order to prove that this pillar cannot assemble using tile types for auxiliary trees). 
\qed
\end{description}



\begin{figure}[tb]
\begin{center}
\includegraphics[scale=0.55]{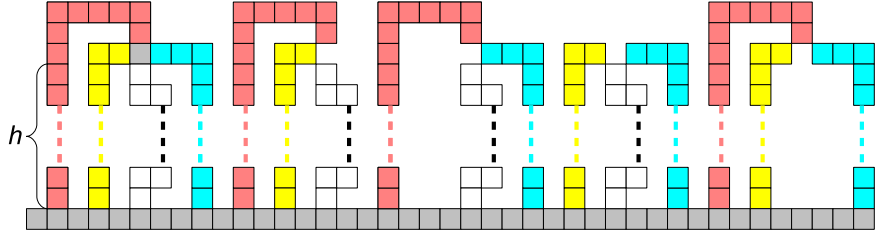}
\end{center}
\caption{
	A logical component for a $<_\tau$-inequality (of 3 terms). 
	The trees of north, west, east type are variable trees (red, yellow, blue), while the remaining tree of south type is an auxiliary one (white). 
	The trees are written concisely for clarity, but they should be of the same shape as illustrated in Figure~\ref{fig:4vars_atleast_tau}. 
}
\label{fig:3vars_lessthan_tau}
\end{figure}

The design of logical components for $<_\tau$-inequalities (of at most 3 terms) is more involved. 
One thing to be noted first is that this cannot be realized simply by modifying the above-mentioned component design for $\ge_\tau$-inequalities by leaving the position adjacent to all the cooperation tips of the pillars empty. 
This is because no attachment may not mean the insufficient strength but label mismatch. 
We propose the following design instead. 

Let us consider a $<_\tau$-inequality:
\begin{equation}\label{eq:less_than_tau_inequality}
	\sum_{1 \le j \le m'} v_{k_j} < \tau 
\end{equation}
for some $m' \le 3$ and $k_1, \ldots, k_{m'} \in \{1, \ldots, n\}$. 
In addition to the $m'$ variable trees for $v_{k_1}, \ldots, v_{k_{m'}}$, we employ one of the $c'$ auxiliary trees prepared in Step~2 that is of distinct type from any of the variable trees (recall that $c'$ is the number of $<_\tau$-inequalities in $\mathcal{S}$). 
Then we combine copies of at least $m'$ of these $m'+1$ trees as done in building the components for $\ge_\tau$-inequalities but in all combinations of $m'$ trees. 
The combination of all the $m'+1$ trees is turned into a ``gadget'' by filling the position that are adjacent to all the cooperation tips, whereas the other combinations are considered to be gadgets without any modification. 
These $m'+2$ gadgets amount to the logical component for the $<_\tau$-inequality \eqref{eq:less_than_tau_inequality}, and two copies of this component are mounted on the scaffold of $S$. 
Note that, once being used, the auxiliary tree will not reused for the component of other $<_\tau$-inequalities any more. 

\begin{description}

\item[{\it Proof of the second item of Property~\ref{property:TAS_as_inequality_system_solver}}] \ \\
	Now we will prove that if \eqref{eq:less_than_tau_inequality} holds, then 
	\begin{equation}\label{eq:less_than_tau_attachment}
		\sum_{1 \le j \le m'} g(\ell_j) < \tau.
	\end{equation}
	Before this proof, let us note that since the auxiliary tree does not appear in any other component, one can have $g$ assign an arbitrary strength to the glue at the cooperation tip of this tree without causing any malfunctioning of the other component, as long as \eqref{eq:less_than_tau_attachment} holds. 
	Setting this strength to be $\tau - \sum_{1 \le j \le m'} v_{k_j}$ enables $\tas$ to assemble this component with introducing only one new tile type that fills the position adjacent to the cooperation tips. 

	For the sake of contradiction, suppose that the sum $\sum_{1 \le j \le m'} g(\ell_j)$ were at least $\tau$. 
	Focus on the seed-free copy of this component, and especially first on the gadget with $m'+1$ pillars, with the position surrouned by the cooperation tips being filled (the leftmost gadget in Figure~\ref{fig:3vars_lessthan_tau}). 
	Let us denote this gadget by $G_1$. 
	Due to the seed-freeness, some of the pillars of $G_1$ assemble upward and have a tile $t$ fill the position by cooperative attachment. 
	Assume that some pillar is absent from the cooperation. 
	Let us denote the gadget that does not contain the ``lazy'' pillar by $G_2$. 
	Then some of the tiles put at the cooperation tips of $G_2$ must be distinct from those at the corresponding positions of $G_1$ in type in order to prevent the tile $t$ from attaching to $G_2$. 
	However, this distinctiveness would cost $\tas$ extra $h$ new tile types, as explained in the proof of the first item. 
	Thus, all of the pillars of $G_1$ must join the cooperation, and hence, all the pillars of $G_1$ assemble upward. 
	Then consider the gadget that is free from the auxiliary pillar, which we denote by $G_3$. 
	The types of tiles at the corresponding cooperation tips of $G_1$ and $G_3$ must be the same in order for $\tas$ not to pay the unaffordable extra $h$ tile types. 
	However, then there must exist $1 \le j \le m'$ such that the tile $t_j$ at the cooperation tip of the tree for $v_j$ and the tile at the corresponding position in $G_1$ must differ from each other in type because of the supposition $\sum_{1 \le j \le m'} g(\ell_j) \ge \tau$. 
	This again causes the unaffordable cost. 
	Consequently, we can say that \eqref{eq:less_than_tau_attachment} must hold. 
	\qed
\end{description}

As the readers might have already noticed, the constant $c$ is the numbers of tile types for the scaffold (actually, equal to its length, which depends only on the number of variables and the number of inequalities in $\mathcal{S}$) plus the number of tile types for each variable tree that exceeds $h$ (top structures differ tree by tree, but a bit). 

When being introduced, the directed tile complexity at the temperatures at most $\tau$ was noted not to be a monotonically decreasing function. 
Let us conclude this section and hence this paper by its proof. 
Since we do not find this non-monotonicity significant in practice, we simply exemplify it and will not inquire deeper into the matter. 
Actually, this proof is a good opportunity to show the usefulness of the unified framework we proposed previously. 

\begin{lemma}
	There exists a shape $S$ such that $\dtilec{3}(S) < \dtilec{4}(S)$. 
\end{lemma}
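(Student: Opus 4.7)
My plan is to instantiate the unified framework of this section with a quadripartite system $\mathcal{S}$ of $\tau$-inequalities specifically engineered to be solvable at $\tau=3$ but infeasible at $\tau=4$. Granting this, the constructive side of Property~\ref{property:TAS_as_inequality_system_solver} (as sketched in the proofs of its two items) produces, from the $\tau=3$ solution of $\mathcal{S}$, a directed temperature-3 TAS of at most $(n{+}c')h+c$ tile types that strictly self-assembles the resulting shape $S$, while the contrapositive of the same property at $\tau=4$ rules out any temperature-4 directed TAS of this size. Together these give $\dtilec{3}(S)\le (n{+}c')h+c<\dtilec{4}(S)$ once $h$ is taken large enough.

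The candidate system I would try is on four positive integer variables $v_1,v_2,v_3,v_4$, with positivity of each $v_i$ enforced by an embedded copy of \eqref{eq:positive}, consisting of the four $\ge_\tau$-inequalities $v_i+v_j+v_k\ge\tau$ indexed by the $3$-subsets of $\{1,2,3,4\}$ together with the six $<_\tau$-inequalities $v_i+v_j<\tau$ indexed by pairs. The term-count restrictions of the framework are met automatically (triples have three terms, pairs two). Quadripartiteness is immediate: put $v_i$ into the $i$th part and drop each auxiliary variable of \eqref{eq:positive} into whichever of the four parts does not contain the $v_i$ it is coupled with.

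Next I would verify the two feasibility claims by elementary arithmetic. At $\tau=3$ the pair constraints force $v_i+v_j\le 2$, which together with positivity pins every $v_i$ to $1$; each triple then sums to exactly $3=\tau$, so $\mathcal{S}$ is solvable. At $\tau=4$ the pair constraints give $v_i+v_j\le 3$, so no $v_i$ exceeds $2$ and no two variables can simultaneously reach $2$; in either remaining case (all $v_i=1$, or exactly one $v_i=2$ and the rest equal $1$) some triple sums to only $3$, violating $\ge 4$, so $\mathcal{S}$ is infeasible at $\tau=4$. Steps~2--4 of the framework then translate $\mathcal{S}$ into the desired shape $S$, and the two conclusions above combine to give $\dtilec{3}(S)<\dtilec{4}(S)$.

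The main obstacle I anticipate is simply pinpointing the right system, since most natural $\tau$-inequality systems become \emph{more} feasible as $\tau$ grows: every $<_\tau$-inequality loosens by one unit while a $\ge_\tau$-inequality is usually slack enough to absorb the same shift by scaling up variables. The ``all-triples-$\ge_\tau$, all-pairs-$<_\tau$'' pattern on four variables works precisely because at $\tau=3$ it collapses the admissible assignments to the single all-ones vector whose triple sums sit exactly at the threshold, and the jump to $\tau=4$ raises the triple requirement by one without giving the pair bounds enough slack to compensate. Once the system is in hand, everything else is a direct appeal to Property~\ref{property:TAS_as_inequality_system_solver} and the framework's recipe for turning solutions of $\mathcal{S}$ into small TASs and non-solutions into tile-complexity lower bounds.
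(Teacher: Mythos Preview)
Your approach is exactly the paper's: invoke the unified framework of this section after choosing, at Step~1, a quadripartite system of $\tau$-inequalities that is solvable for $\tau=3$ but not for $\tau=4$; Property~\ref{property:TAS_as_inequality_system_solver} then does the rest. Your concrete choice of system (all four triples~$\ge\tau$, all six pairs~$<\tau$ on four positive variables, positivity via~\eqref{eq:positive}) differs from the paper's citation of~\eqref{eq:TP_example_1}, but your feasibility analysis at $\tau=3$ and infeasibility analysis at $\tau=4$ are correct, the term-count and quadripartiteness requirements are met, and the variables stay below $\tau$ as Remark~\ref{rmk:strictly_lessthan_tau} requires. In fact your system is the one that actually works here: by the paper's own discussion after~\eqref{eq:TP_example_1}, that system has minimum $\tau=4$, i.e., it is solvable at $\tau=4$ and \emph{not} at $\tau=3$, which via the framework would yield $\dtilec{4}(S)<\dtilec{3}(S)$ rather than the stated inequality---so your instantiation quietly repairs what looks like a slip in the paper's reference.
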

\begin{proof}
	As explained into details when being introduced, the framework needs only our appropriate choice of a system of $\tau$-inequalities at Step 1. 
	Actually, it suffices to find as such a system the one that is solvable for $\tau = 3$ but not for $\tau = 4$. 
	Such an equation, however, was already given as \eqref{eq:TP_example_1} in Section~\ref{subsec:TP_preliminaries}. 
	Needless to say, all $\ge_\tau$-inequalities and all $<_\tau$-inequalities in this system are of at most 4 and 3 terms, respectively, and this system is quadripartite. 
\end{proof}

A problem of theoretical interest is whether for any temperature $\tau$, there exists a shape $S$ such that $\dtilec{k}(S)$ is strictly decreasing over the interval $[1, \tau]$. 
The framework cannot be applied, or even if it can, some modification is necessary.  

	\section{Conclusions}

In this paper, we continued research on the behavioral equivalence among TASs at the local level that was initiated in \cite{ChenDotySeki2011} and solved their open problems of whether minimizing the temperature of TASs that behave locally as specified as input can be solved in a polynomial time by proving that this problem is \NP-hard. 
We deduced this from our other result that the threshold programming, a special form of integer programming, is still \NP-hard. 
Furthermore, we proposed a unified framework to work on various problems related to the temperature and the global behavior of TASs, which makes use of systems of $\tau$-inequalities designed for the proof of the above results. 

There are several directions for further research. 
The \NP-hardness stated in Theorem~\ref{thm:findoptstrength_NPhard} never eliminates the possibility to improve the algorithm by Chen, Doty, and Seki so as to optimize the size and temperature of TASs for the $n \times n$ square $\square{n}$ simultaneously.  
Elucidating this will deepen our knowledge of the tile complexity of $\square{n}$ further. 
As for Theorem~\ref{thm:tau_better_than_less}, our interest lies on the ratio $\dtilec{< \tau}(S_\tau)/\dtilec{\tau}(S_\tau)$. 
In our construction of $S_\tau$, the ratio is constant no matter how $h$ is adjusted. 
In contrast, $\dtilec{1}(\square{n})/\dtilec{2}(\square{n})$ can get arbitrarily large by making $n$ larger, provided the conjecture $\dtilec{1}(\square{n}) = 2n-1$ is true (or even $\dtilec{1}(\square{n}) = \Omega(n)$ is fine). 
Can we find such a shape for arbitrary temperature? 
With Theorem~\ref{thm:TC_above_4_NPhard} and the result by Adleman et al.~\cite{AdChGoHuKeEsRo2002}, the \NP-hardness of computing directed tile complexity at the temperatures below 3 must be proved.

	\section*{Acknowledgements}

We gratefully acknowledge helpful discussions with David Doty, Hiro Ito, and Kei Taneishi on several points in this paper. 
The earlier versions of this paper owe much to the valuable comments from Kojiro Higuchi, Natasha Jonoska, and the anonymous referees of CiE 2012. 

This research is supported mainly by the Kyoto University Start-up Grant-in-Aid for Young Researchers No. 021530 to Shinnosuke Seki and by the Funding Program for Next Generation World-Leading Researchers (NEXT program) to Yasushi Okuno. 
Works by the first author are also financially supported in part by Helsinki Institute of Information Technology (HIIT) Pump Priming Project Grant (902184/T30606) and by Department of Information and Computer Science, Aalto University. 

	\bibliographystyle{plain}
	\bibliography{hightempTAS}

\end{document}